\documentclass[11pt]{article}
\usepackage{fullpage,amsmath,amsthm,color}
\usepackage{palatino}
\usepackage{graphicx}
\usepackage{colortbl}
\usepackage{enumerate}
\usepackage{algorithm}
\newtheorem{theorem}{Theorem}[section]
\newtheorem{lemma}[theorem]{Lemma}
\newcommand{\colora}{}
\newcommand{\colorb}{}
\newcommand{\reviewa}[1]{}
\newcommand{\reviewb}[1]{}
\newcommand{\Xomit}[1]{}

\title{Replacement Paths via Row Minima of Concise Matrices\thanks{Accepted to {\em SIAM Journal on Discrete Mathematics}.}}

\author{Cheng-Wei Lee\thanks{Email:~{\tt{r99922035@ntu.edu.tw}}.
Department of Computer Science and Information Engineering, National
Taiwan University.}
\and 
Hsueh-I Lu\thanks{Email:~{Corresponding author. \tt{hil@csie.ntu.edu.tw}}.
  Web:~{\tt{www.csie.ntu.edu.tw/\~{}hil}}. Department of Computer
  Science and Information Engineering, National Taiwan University.
  This author also holds joint appointments in the Graduate Institute
  of Networking and Multimedia and the Graduate Institute of
  Biomedical Electronics and Bioinformatics, National Taiwan
  University. Address: 1 Roosevelt Road, Section 4, Taipei 106,
  Taiwan, ROC.  Research supported in part by NSC grant
  101--2221--E--002--062--MY3.}
}

\date{\today}

\begin{document}
\maketitle

\begin{abstract}
Matrix $M$ is {\em $k$-concise} if the finite entries of each column
of $M$ consist of $k$ or less intervals of identical numbers. We give
an $O(n+m)$-time algorithm to compute the row minima of any
$O(1)$-concise $n\times m$ matrix.  Our algorithm yields the first
$O(n+m)$-time reductions from the replacement-paths problem on an
$n$-node $m$-edge undirected graph (respectively, directed acyclic
graph) to the single-source shortest-paths problem on an $O(n)$-node
$O(m)$-edge undirected graph (respectively, directed acyclic graph).
That is, we prove that the replacement-paths problem is no harder than
the single-source shortest-paths problem on undirected graphs and
directed acyclic graphs.  Moreover, our linear-time reductions lead to
the first $O(n+m)$-time algorithms for the replacement-paths problem
on the following classes of $n$-node $m$-edge graphs (1) undirected
graphs in the word-RAM model of computation, (2) undirected planar
graphs, (3) undirected minor-closed graphs, and (4) directed acyclic
graphs.
\end{abstract}

\section{Introduction}
Computing a shortest path between two nodes in a graph is one of the
most fundamental algorithmic problems in computer science. The variant
of the shortest-path problem which asks for a shortest path between
two nodes that avoids a failed node or edge has also been extensively
studied in the last few decades.  
Let $G$ be a graph. 
{\colorb\reviewb{3}\label{b3}For any node $v$ of $G$, let $G-v$
  denote the graph obtained from $G$ by deleting $v$ and its incident
  edges.  For any edge $e$ of $G$, let~$G-e$ denote the graph obtained
  from $G$ by deleting $e$.}
For any subgraph $G'$ of $G$, let $w(G')$ be the sum of edge weights
of $G'$.  An {\em $rs$-path} is a path from node $r$ to node~$s$.  The
{\em distance}~$d_G(r,s)$ from $r$ to $s$ in $G$ is the minimum of
$w(P)$ over all $rs$-paths $P$ of $G$.  A {\em shortest} $rs$-path $P$
of $G$ satisfies $w(P)=d_G(r,s)$.
We study the following two versions of the {\em replacement-paths
  problem} on $G$ with respect to a given shortest $rs$-path $P$ of
$G$:
\begin{itemize}
\item The {\em edge-avoiding version} computes $d_{G-e}(r,s)$ for all
  edges~$e$ of $P$.
\item The {\em node-avoiding version} computes $d_{G-v}(r,s)$ for all
  nodes~$v$ of $P$ other than $r$ and $s$.
\end{itemize}
The edge-avoiding version can be reduced in linear time to the
node-avoiding version: Let $G'$ be the graph obtained from $G$ by
subdividing each edge $xy$ of $P$ into two edges $xv$ and $vy$ with
$w(xv)=w(vy)=w(xy)/2$.  We have $d_{G-xy}(r,s)=d_{G'-v}(r,s)$.  No
linear-time reduction for the other direction is known.
See,~e.g.,~\cite{HershbergerS01,ByersW84,NisanR99} for applications of
the problem.  Extensive surveys for the long history of algorithms and
applications of this problem can be found
in~\cite{EmekPR10,RodittyZ05}.
We show that the replacement-paths problem on an $n$-node $m$-edge
undirected graph can be reduced in $O(n+m)$ time to the single-source
shortest-paths problem on an $O(n)$-node $O(m)$-edge undirected graph.
\begin{theorem}
\label{theorem:theorem1}
Let $G$ be an $n$-node $m$-edge undirected graph.  Let~$P$ be a given
shortest $rs$-path of $G$, where $r$ and $s$ are two distinct nodes of
$G$. Given distances $d_G(r,v)$ and $d_G(v,s)$ for all nodes $v$ of
$G$, we have the following statements.
\begin{enumerate}
\item 
It takes $O(n+m)$ time to solve the edge-avoiding replacement-paths
problem on $G$ with respect to $P$.
\item 
The node-avoiding replacement-paths problem on $G$ with respect to $P$
can be reduced in $O(n+m)$ time to the problem of computing distances
$d_{G_0}(r_0,v)$ for some node $r_0$ and all nodes $v$ of an
$O(n)$-node $O(m)$-edge undirected graph $G_0$.
\end{enumerate}
\end{theorem}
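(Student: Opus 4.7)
The plan is to derive both parts of the theorem from the paper's main $O(n+m)$-time algorithm for the row minima of an $O(1)$-concise matrix. The common idea is to associate each candidate replacement $rs$-path with a single ``witness'' non-$P$ edge $xy$ carrying the cost $c(xy)=\min\bigl(d_G(r,x)+w(xy)+d_G(y,s),\,d_G(r,y)+w(xy)+d_G(x,s)\bigr)$, which is computable in $O(1)$ time from the input distances, and then to tabulate these candidates in a matrix whose row minima are the target replacement distances.

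For part~1, I form a matrix $M$ whose rows are indexed by $e_1,\dots,e_k$ and whose columns are indexed by the non-$P$ edges of $G$, setting $M[i][xy]=c(xy)$ when some shortest $r$-$x$ path and some shortest $y$-$s$ path in $G$ jointly avoid $e_i$ (in the orientation that witnesses $c(xy)$), and $M[i][xy]=\infty$ otherwise. A standard exchange argument for undirected shortest paths yields the row minimum of row~$i$ as $d_{G-e_i}(r,s)$. The main technical step is to show that $M$ is $O(1)$-concise. Since $P$ is a shortest $rs$-path, I may include every $P$-edge in a shortest-path tree $T_r$ of $G$ rooted at $r$ and in a shortest-path tree $T_s$ rooted at $s$; both trees are extractable from the given distances in $O(n+m)$ time via the tight-edge subgraph. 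With these choices, the $T_r$-path from $r$ to any node $u$ uses exactly a prefix $e_1,\dots,e_{j_r(u)}$ of $P$ and the $T_s$-path from $s$ to $u$ uses exactly a suffix $e_k,\dots,e_{j_s(u)+1}$. Hence, for each non-$P$ edge $xy$, the rows where $M[i][xy]$ is finite form the interval $\{j_r(x)+1,\dots,j_s(y)\}$ in one orientation plus at most one more interval from the opposite orientation, so every column has at most two intervals of identical finite entries. Feeding $M$ to the row-minima algorithm then solves the edge-avoiding problem in $O(n+m)$ time.

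For part~2, the same framework applies in spirit, but the input distances do not expose whether a shortest $r$-$x$ (or $y$-$s$) path can detour around an internal node $v_j$ of $P$. The plan is to build an auxiliary undirected graph $G_0$ with $O(n)$ nodes and $O(m)$ edges in which every internal $v_j$ is ``node-split'' into a small gadget containing a designated linking edge $f_j$ that plays the role of $v_j$, together with a designated shortest $r_0s_0$-path $P_0$ through the gadgets. The source $r_0$ is chosen (via a standard two-copy trick that attaches a mirror of $G_0$) so that a single SSSP returning $d_{G_0}(r_0,\cdot)$ supplies both the $r$-side and $s$-side distances required to apply the part~1 construction to $G_0$ and $P_0$; the resulting row minima return $d_{G_0-f_j}(r_0,s_0)=d_{G-v_j}(r,s)$, completing the reduction. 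I expect the gadget design to be the main obstacle: it must prevent any short path in $G_0-f_j$ from ``shortcutting'' through the non-linking halves of the split gadget and thereby simulating a $G$-path that secretly revisits $v_j$, which is delicate in the undirected setting because the non-$P$ edges incident to $v_j$ have no canonical ``incoming/outgoing'' partition.
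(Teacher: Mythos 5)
Your Part~1 is in the same spirit as the paper's Lemma~\ref{lemma:lemma1}, which builds a $1$-concise edge-replacement matrix from a single level function $\lambda$ coming from one shortest-paths tree $T\supseteq P$ rooted at $r$: the column for a crossing edge $xy$, oriented so that $\lambda(x)<\lambda(y)$, carries $d_G(r,x)+w(xy)+d_G(y,s)$ in rows $\lambda(x)+1,\ldots,\lambda(y)$ and $\infty$ elsewhere. One caution about your variant: setting $M[i][xy]=c(xy)=\min$ over both orientations while gating finiteness by the minimizing orientation can drop the witnessing column. If $Q$ is a shortest replacement path in $G-e_i$ and $xy$ is its first crossing of the cut $(R_i,\bar R_i)$, so $\lambda(x)<i\le\lambda(y)$, then $d_G(r,x)+w(xy)+d_G(y,s)\le w(Q)$; but if the opposite orientation is strictly cheaper, your rule marks $M[i][xy]=\infty$ (the minimizing orientation does not avoid $e_i$), and row~$i$'s minimum can end up too large. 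Storing orientation-specific values in two separate intervals, or simply adopting the paper's single-tree orientation, repairs this.

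Part~2 is where the gap is decisive. You propose node-splitting inside an auxiliary undirected graph $G_0$ so that removing a linking edge $f_j$ simulates removing $v_j$, then apply Part~1. As you say yourself, in an undirected graph the edges incident to $v_j$ have no incoming/outgoing partition, so any split gadget leaks: a path in $G_0-f_j$ can enter and leave the gadget on the same side and thereby simulate a $G$-path that still uses $v_j$. I do not see how to close this within the $O(n)$-node, $O(m)$-edge budget, and the paper does not attempt it. Instead, Lemma~\ref{lemma:lemma2} never converts node-avoidance to edge-avoidance. It splits the candidate crossing edges $xy$ (with $\lambda(x)<\lambda(y)$) for row $i$ into two kinds by the level of the $R_i$-side endpoint: if $\lambda(x)<i$ the cost is $d_G(r,x)+w(xy)+d_G(y,s)$ as in the edge case, and if $\lambda(x)=i$ (so $x\in V_i$) the cost is $d_{G[R_i\cup V_i]}(r,x)+w(xy)+d_G(y,s)$, relying on \cite[Lemma~3]{NardelliPW03}. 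All the $d_{G[R_i\cup V_i]}(r,x)$ values come from a single SSSP on a graph $G_0$ built as the disjoint union of the induced subgraphs $G[V_i]$, each augmented with a source $r_i$ whose edge weights encode $\min\{d_G(r,u)+w(ux):u\in R_i\}$, all joined to a new root $r_0$ by zero-weight edges. The resulting node-replacement matrix is $2$-concise (one singleton entry in row $\lambda(x)$ using $d_{G_0}$, plus an interval in rows $\lambda(x)+1,\ldots,\lambda(y)-1$ using $d_G$), and its row minima are exactly the $d_{G-v_i}(r,s)$. This construction, not a split gadget, is the missing idea in your plan.
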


\begin{table}
\caption{Previous work and our results on the replacement-paths problem.}\label{table:table1}
\begin{center}
{\small\begin{tabular}{|c||c|c||c|}
\hline
&edge-avoiding version&node-avoiding version&ours\\
\hline
\hline
 directed graph&$O(mn+n^2\log\log n)$~\cite{GotthilfL09}&$O(mn+n^2\log n)$~\cite{CormenLRS09}&\\
\hline
directed acyclic graph&$O(m+n\cdot\alpha(m,n))$~\cite{Bhosle05}&&$O(m+n)$\\
\hline
directed acyclic graph (RAM)&
$O(m+n\cdot\alpha(2n,n))$~\cite{Bhosle05}&&$O(m+n)$\\
\hline
 undirected graph&$O(m+n\log n)$~\cite{MalikMG89}&$O(m+n\log n)$~\cite{NardelliPW03}&$O(m+n\log n)$\\
\hline
undirected graph (RAM) 
&$O(m\cdot\alpha(m,n))$~\cite{NardelliPW01} &&$O(m+n)$\\
\hline
undirected planar graph&
$O(n)$~\cite{Bhosle05}&&$O(n)$\\\hline
undirected minor-closed graph&
&&$O(n)$\\
\hline
\end{tabular}
}
\end{center}
\end{table}

Combining with Dijkstra's single-source shortest-paths algorithm~(see,
e.g.,~\cite{CormenLRS09}), Theorem~\ref{theorem:theorem1} solves the
replacement-paths problem in $O(m+n\log n)$ time, matching the best
known result for the edge-avoiding version of Malik, Mittal, and
Gupta~\cite{MalikMG89} and that for the node-avoiding version of
Nardelli, Proietti, and Widmayer~\cite{NardelliPW03}.
Combining with the algorithm of Henzinger, Klein, Rao, and
Subramanian~\cite{HenzingerKRS97}, Theorem~\ref{theorem:theorem1}
yields an $O(n+m)$-time algorithm for both versions of the problem on
planar graphs, while $O(n+m)$-time algorithms on planar graphs were
only known for the edge-avoiding version (see Bhosle~\cite{Bhosle05}).
Combining with the algorithm of Tazari and
M{\"u}ller-Hannemann~\cite{TazariM09}, Theorem~\ref{theorem:theorem1}
leads to the first $O(n+m)$-time algorithm on minor-closed graphs.
Combining with the algorithms of Thorup~\cite{Thorup00,Thorup99},
Theorem~\ref{theorem:theorem1} solves both versions of the problem in
$O(n+m)$ time in the word-RAM model of computation, improving upon the
$O(m\cdot\alpha(m,n))$-time transmuter-based algorithm of Nardelli,
Proietti, and Widmayer~\cite{NardelliPW01}, which works only for the
edge-avoiding version.  See~\cite{Pettie08} for more results of the
single-source shortest-paths problem that can be combined with our
reductions to yield efficient algorithms for the replacement-paths
problem.

Our proof of Theorem~\ref{theorem:theorem1} also holds for directed
acyclic graphs.  Since the single-source shortest-paths problem can be
solved in linear time on directed acyclic graphs (see,
e.g.,~\cite{CormenLRS09}), we solve both versions of the
replacement-paths problem on any $n$-node $m$-edge directed acyclic
graph in $O(n+m)$ time, improving upon the algorithm of
Bhosle~\cite{Bhosle05} for the edge-avoiding version, which runs in
$O(m+n\cdot\alpha(2n,n))$ time in the word-RAM model of computation
and runs in $O(m\cdot\alpha(m,n))$ time in general.
\begin{theorem}
\label{theorem:theorem2}
For any two nodes $r$ and $s$ of an $n$-node $m$-edge directed acyclic
graph $G$, it takes $O(n+m)$ time to solve the replacement-paths
problem on $G$ with respect to any given shortest $rs$-path of $G$.
\end{theorem}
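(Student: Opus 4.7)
The plan is to derive Theorem~\ref{theorem:theorem2} by combining the directed-acyclic-graph analogue of Theorem~\ref{theorem:theorem1}, which the excerpt asserts holds by the same proof, with the fact that single-source shortest paths on an $n$-node $m$-edge DAG can be computed in $O(n+m)$ time by relaxing edges in topological order (see, e.g.,~\cite{CormenLRS09}).

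The first step is to supply the distance information that Theorem~\ref{theorem:theorem1} takes as input. I would compute $d_G(r,v)$ for every node~$v$ by running the linear-time DAG single-source shortest-paths algorithm on $G$ from source~$r$, and then compute $d_G(v,s)$ for every~$v$ by running the same algorithm on the reverse of $G$ from source~$s$; since the reverse of a DAG is again a DAG, both preprocessing steps take $O(n+m)$ time.

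The second step is to invoke Theorem~\ref{theorem:theorem1} for DAGs. Its part~(1) immediately gives an $O(n+m)$-time algorithm for the edge-avoiding version. For the node-avoiding version, part~(2) yields an $O(n+m)$-time reduction to the problem of computing $d_{G_0}(r_0,v)$ for all $v$ on an $O(n)$-node $O(m)$-edge auxiliary graph $G_0$; running the linear-time DAG shortest-paths algorithm on $G_0$ then completes the proof within the same time bound, provided $G_0$ is itself a DAG. One also has to confirm that the linear-time reduction from the edge-avoiding to the node-avoiding version given in the introduction preserves acyclicity, which it does since subdividing edges of a DAG produces a DAG.

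The main obstacle is thus to verify that the constructions inside the proof of Theorem~\ref{theorem:theorem1} transfer to DAGs, and in particular that the auxiliary graph $G_0$ produced in part~(2) inherits acyclicity from $G$; only then does linear-time topological-order relaxation apply to $G_0$. I expect this to be a matter of inspecting the reduction and observing that its local edge manipulations along the given shortest $rs$-path~$P$ never create a new directed cycle, because any new edge introduced is directed consistently with $P$. Once this acyclicity is certified, the theorem assembles routinely from the linear-time preprocessing, the linear-time reduction of Theorem~\ref{theorem:theorem1}, and the linear-time DAG shortest-paths computation.
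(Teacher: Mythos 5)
Your proposal is correct and matches the paper's argument: Theorem~\ref{theorem:theorem2} is derived from Lemma~\ref{lemma:lemma3} together with the DAG analogues of Lemmas~\ref{lemma:lemma1} and~\ref{lemma:lemma2}, with all shortest-path subroutines (computing $d_G(r,v)$, $d_G(v,s)$, and the distances in $G_0$) handled by linear-time topological-order relaxation. The one point you leave as a plausibility check --- that $G_0$ is acyclic --- the paper settles directly, and its reason is slightly cleaner than your ``consistent with $P$'' heuristic: $G_0$ is a disjoint union of induced subgraphs $G[V_1],\ldots,G[V_{p-1}]$ of the DAG $G$ plus a tree on new nodes $r_0,r_1,\ldots,r_{p-1}$ whose edges all point from the tree into those subgraphs, so no directed cycle can arise; also, you do not actually need the edge-to-node subdivision reduction, since part~(1) of Theorem~\ref{theorem:theorem1} handles the edge-avoiding case directly.
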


Table~\ref{table:table1} compares our results with previous work.

\subsection{Technical overview}

\begin{figure}[t]
\begin{center}
\scalebox{0.83}{
\begin{minipage}[b]{8.5cm}
\begin{center}
\begin{tabular}{|c||c|c|c|c|c|c|c|c|c|c|}
\hline
$M$&$v_0v_6$&$v_0v_8$&$v_6v_7$&$v_6v_5$&$v_7v_4$&$v_9v_5$\\
\hline
\hline
1&13&15&  &  &  &\\
\hline
2&  &15&18&12&  &\\
\hline
3&  &15&  &12&16&\\
\hline
4&  &  &  &12&16&9\\
\hline
5&  &  &  &12&  &9\\
\hline
\end{tabular}\\[10pt]
(a)
\end{center}
\end{minipage}
\begin{minipage}[b]{7.5cm}
\begin{center}
\begin{tabular}{|c||c|c|c|c|c|c|c|c|c|c|}
\hline
$N$&$v_0v_8$&$v_6v_7$&$v_6v_5$&$v_7v_4$&$v_9v_5$\\
\hline
\hline
1&15&19&13&  &\\
\hline
2&15&  &12&20&\\
\hline
3&  &  &12&16&19\\
\hline
4&  &  &12&  &9\\
\hline
\end{tabular}\\[24pt]
(b)
\end{center}
\end{minipage}
}
\end{center}
\caption{(a) A concise $5\times 6$ matrix $M$. (b) A $2$-concise
  $4\times 5$ matrix $N$.  The $\infty$-entries in $M$ and $N$ are
  left out.}
\label{figure:figure1}
\end{figure}

A matrix $M$ is {\em $k$-concise} if the finite entries of each column
of $M$ consist of $k$ or less intervals of identical numbers.  A
$1$-concise matrix is {\em concise}.  Figure~\ref{figure:figure1}(a)
shows a concise matrix.  Figure~\ref{figure:figure1}(b) shows a
$2$-concise matrix.  A $k$-concise matrix may not be sparse, but each
column of a $k$-concise matrix can be concisely represented by $O(k)$
numbers, i.e., three numbers for each of the $k$ or less intervals of
identical finite numbers: (a) the starting row index, (b) the ending
row index, and (c) the identical number of the interval.  For
instance, the columns with indices $v_6v_5$, $v_7v_4$, and $v_9v_5$ of
the $2$-concise matrix in Figure~\ref{figure:figure1}(b) can be
represented by $\langle 1,1,13;2,4,12\rangle$, $\langle 2,2,20;
3,3,16\rangle$, and $\langle 3,3,19;4,4,9\rangle$, respectively.
Throughout the paper, all matrices are in this {\em concise
  representation}.  The {\em row-minima problem} on a matrix $M$ is to
compute the minimum of each row of $M$.  We show that the
replacement-paths problem on an $n$-node $m$-edge undirected
(respectively, directed acyclic) graph can be reduced in $O(n+m)$ time
to the row-minima problem on a $2$-concise $n\times m$ matrix {\colorb\reviewb{5}\label{b5}obtainable} from the solution to the single-source
shortest-paths problem on an $O(n)$-node $O(m)$-edge undirected
(respectively, directed acyclic) graph (see Lemma~\ref{lemma:lemma1}
in~\S\ref{subsection:reduction-edge} for the edge-avoiding version and
Lemma~\ref{lemma:lemma2} in~\S\ref{subsection:reduction-node} for the
node-avoiding version).  
Our reductions {\colorb\reviewb{6}\label{b6}exploit} the structure
properties of replacement paths studied by, e.g., Malik et
al.~\cite{MalikMG89}, Nardelli et
al.~\cite{NardelliPW03,NardelliPW01}, and Bhosle~\cite{Bhosle05}.
\begin{figure}[t]
\begin{center}
\scalebox{0.83}{
\begin{minipage}{7.5cm}
\begin{center}
\begin{tabular}{|c||c|c|c|c|c|c|c|c|c|c|}
\hline
$N_1$&$v_0v_8$&$v_6v_7$&$v_6v_5$&$v_7v_4$&$v_9v_5$\\
\hline
\hline
1&15&19&13&  &  \\
\hline
2&15&  &  &20&  \\
\hline
3&  &  &  &  &19  \\
\hline
4&  &  &  &  & \\
\hline
\end{tabular}\\[24pt]
(a)
\end{center}
\end{minipage}
\begin{minipage}{7.5cm}
\begin{center}
\begin{tabular}{|c||c|c|c|c|c|c|c|c|c|c|}
\hline
$N_2$&$v_0v_8$&$v_6v_7$&$v_6v_5$&$v_7v_4$&$v_9v_5$\\
\hline
\hline
1&  &  &  &  &\\
\hline
2&  &  &12&  &\\
\hline
3&  &  &12&16&\\
\hline
4&  &  &12&  &9\\
\hline
\end{tabular}\\[24pt]
(b)
\end{center}
\end{minipage}
}
\end{center}
\caption{Two concise $4\times 5$ matrices $N_1$ and $N_2$ whose
  entry-wise minimum is the $2$-concise $4\times 5$ matrix $N$ of
  Figure~\ref{figure:figure1}(b). The $\infty$-entries of $N_1$ and
  $N_2$ are left out.}
\label{figure:figure2}
\end{figure}
To show that the replacement-paths problem is no harder than the
single-source shortest-paths problem, we give the first $O(n+m)$-time
algorithm for the row-minima problem on any $O(1)$-concise $n\times m$
matrix (see Lemma~\ref{lemma:lemma3} in~\S\ref{section:row-minima}).
As illustrated by Figure~\ref{figure:figure2}, for any $k$-concise
$n\times m$ matrix $N$ with $k=O(1)$, it takes $O(m)$ time to derive
concise $n\times m$ matrices $N_1,N_2,\ldots,N_k$ whose entry-wise
minimum is $N$.  Thus, the main technical challenge lies in computing
the row minima of an $n\times m$ concise matrix $M$ in $O(n+m)$ time.
The rest of the overview elaborates on our $O(n+m)$-time algorithm for
the row-minima problem on any concisely represented $n\times m$
concise matrix $M$.

The {\em thickness} $\theta$ of $M$ is the length of a longest
interval of identical finite entries over all columns of $M$.  For
instance, the thickness of the matrix in
Figure~\ref{figure:figure1}(a) (respectively,
Figures~\ref{figure:figure2}(a) and~\ref{figure:figure2}(b)) is $4$
(respectively, $2$ and $3$).
The {\em broadness} $\beta$ of $M$ is the minimum of (i) the number of
distinct starting row indices for the intervals of finite entries over
all columns of $M$, and (ii) the number of distinct ending row indices
for the intervals of finite entries over all columns of $M$. For
instance, the broadness of the matrix in
Figure~\ref{figure:figure1}(a) (respectively,
Figures~\ref{figure:figure2}(a) and~\ref{figure:figure2}(b)) is $4$
(respectively, $3$ and $2$).
The row minima of $M$ can be computed in $O(n+m+\theta\cdot\beta)$
time by Lemma~\ref{lemma:lemma6} in~\S\ref{subsection:step1}.  The
thickness and broadness of $M$ can both be as large as $n$, so
applying Lemma~\ref{lemma:lemma6} on $M$ may require $\Omega(n^2)$
time.  Our $O(n+m)$-time algorithm is based upon the technique of
deriving matrices with smaller thickness or broadness whose row minima
yield the row minima of $M$.  (Details are in the proofs of
Lemma~\ref{lemma:lemma3} in~\S\ref{subsection:step3} and
Lemma~\ref{lemma:lemma7} in~\S\ref{subsection:step1}.)  Specifically,
we derive four $n$-row matrices $M_0,M_1,M_2,M_3$ from $M$ according
to some positive integral {\em brush factor} $h$ such that the row
minima of $M$ is the entry-wise minima of the row minima of the four
matrices.
A column of $M$ is {\em $h$-brushed} if it contains at least one
finite entry in rows $h,2h,\ldots,\lfloor\frac{n}{h}\rfloor\cdot h$.
{\colorb\reviewb{8}\label{b8}For instance, all columns of the matrix in
  Figure~\ref{figure:figure3}(a) are $3$-brushed.}
Matrix $M_0$ is the submatrix of $M$ induced by the non-$h$-brushed
columns.
{\colorb See Figure~\ref{figure:figure4}(a) for a matrix $M_0$
  that has no $3$-brushed columns.}
{\colorb\reviewb{9}\label{b9} Matrices $M_1$, $M_2$, and $M_3$ represent the
  $h$-brushed columns of $M$:} Matrix $M_1$ takes over the first $h$
or less finite entries of each $h$-brushed column of $M$ up to the
first {\colora row with a finite entry}\reviewa{3}\label{a3} whose index is an integral
multiple of $h$; matrix $M_3$ takes over the last $h-1$ or less finite
entries of each $h$-brushed column of $M$ starting from the
{\colora row with a finite entry} that immediately succeeds the
last row whose index is an integral multiple of $h$; and matrix $M_2$
takes over the finite entries of each $h$-brushed column in between.
The entry-wise minimum of matrices $M_1$, $M_2$, and $M_3$ is the
submatrix of $M$ induced by the $h$-brushed columns.
\begin{figure}[t]
\begin{center}
\scalebox{0.83}{
\begin{minipage}{6.5cm}
\begin{center}
\begin{tabular}{|c||c|c|c|c|c||c|c|c|}
\hline
$M$&1&2&3&4&5&$\mu$&$q$&$z$\\
\hline
\hline
1 & 9&  &  &  &&   9&  9&  1\\
\hline
2 & 9& 7&  &  &&   7&  7&  1\\
\hline
\rowcolor[gray]{0.9}3& 9& 7& 5&  &&\cellcolor[gray]{1}5&\cellcolor[gray]{1}5&\cellcolor[gray]{1}1\\
\hline
4 &  & 7& 5&  &&   5&   &  0\\
\hline
5 &  & 7& 5& 6&&   5&   &  0\\
\hline
\rowcolor[gray]{0.9}6&  & 7& 5& 6&&\cellcolor[gray]{1}5&\cellcolor[gray]{1}&
\cellcolor[gray]{1}0\\
\hline
7 &  & 7& 5& 6& 8& 5&   &  0\\
\hline
8 &  & 7& 5& 6& 8& 5&   &  0\\
\hline
\rowcolor[gray]{0.9}9&  & 7&  & 6& 8&\cellcolor[gray]{1}6&\cellcolor[gray]{1}6&\cellcolor[gray]{1}1\\
\hline
10&  &  &  & 6& 8&  6&  &  0\\
\hline
11&  &  &  &  & 8&  8& 8&  1\\
\hline
\end{tabular}\\[10pt]
(a)
\end{center}
\end{minipage}
\begin{minipage}{5cm}
\begin{center}
\begin{tabular}{|c||c|c|c|c|c|c|c|}
\hline
$M_1$&1&2&3&4&5\\
\hline
\hline
1 & 9&  &  &  &  \\
\hline
2 & 9& 7&  &  &  \\
\hline
3 & 9& 7& 5&  &  \\
\hline
4 &  &  &  &  &  \\
\hline
5 &  &  &  & 6&  \\
\hline
6 &  &  &  & 6&  \\
\hline
7 &  &  &  &  & 8\\
\hline
8 &  &  &  &  & 8\\
\hline
9 &  &  &  &  & 8\\
\hline
10&  &  &  &  &  \\
\hline
11&  &  &  &  &  \\
\hline
\end{tabular}\\[10pt]
(b)
\end{center}
\end{minipage}
\begin{minipage}{3.4cm}
\begin{center}
\begin{tabular}{|c||c|c|c|c|c|c|}
\hline
$M_2$&2&3&4\\
\hline
\hline
1 &  &  &  \\
\hline
2 &  &  &  \\
\hline
3 &  &  &  \\
\hline
4 & 7& 5&  \\
\hline
5 & 7& 5&  \\
\hline
6 & 7& 5&  \\
\hline
7 & 7&  & 6\\
\hline
8 & 7&  & 6\\
\hline
9 & 7&  & 6\\
\hline
10&  &  &  \\
\hline
11&  &  &  \\
\hline
\end{tabular}\\[10pt]
(c)
\end{center}
\end{minipage}
\begin{minipage}{3.6cm}
\begin{center}
\begin{tabular}{|c||c|c|c|c|c|c|}
\hline
$M_3$&3&4&5\\
\hline
\hline
1 &  &  &  \\
\hline
2 &  &  &  \\
\hline
3 &  &  &  \\
\hline
4 &  &  &  \\
\hline
5 &  &  &  \\
\hline
6 &  &  &  \\
\hline
7 & 5&  &  \\
\hline
8 & 5&  &  \\
\hline
9 &  &  &  \\
\hline
10&  & 6& 8\\
\hline
11&  &  & 8\\
\hline
\end{tabular}\\[10pt]
(d)
\end{center}
\end{minipage}
}
\end{center}
\caption{Each column of matrix $M$ is $3$-brushed.  The minima array
  $\mu$ of $M$ and its corresponding query array~$q$ and auxiliary
  binary string $z$ are displayed to the right of $M$. The entries of
  $q$ that do not matter are left out.  Matrix $M_1$ has thickness $3$
  and broadness $3$.  Every three consecutive rows of $M_2$ are
  identical.  Matrix $M_3$ has thickness $2$ and broadness $2$.  The
  $\infty$-entries in these four matrices are left out.  Matrix $M_1$
  has no all-$\infty$ columns.  The all-$\infty$ columns of $M_2$ and
  $M_3$ are omitted.}
\label{figure:figure3}
\end{figure}
See Figures~\ref{figure:figure3}(b)--\ref{figure:figure3}(d) for the
$M_1$, $M_2$, and $M_3$ obtained from the $M$ in
Figure~\ref{figure:figure3}(a) with brush factor $h=3$.  Matrices
$M_1$ and $M_3$ have thickness $O(h)$ and broadness $O(\frac{n}{h})$,
so the row minima of $M_1$ and $M_3$ can be computed in $O(n+m)$ time
by Lemma~\ref{lemma:lemma6} for any choice of $h$.  In order to
compute the row minima of $M_0$ and $M_2$ in $O(n+m)$ time, we let
$h=\Theta(\log\log n)$ and resort to two intermediate algorithms for
the row-minima problem. As to be explained in the next two paragraphs,
we (1) apply the first intermediate algorithm on an
$O(\frac{n}{h})$-row $O(m)$-column matrix obtained from $M_2$ by
condensing its identical rows and (2) apply the second intermediate
algorithm on $O(h)$-row matrices derived from $M_0$ whose overall
number of rows (respectively, columns) is $O(n)$ (respectively,
$O(m)$).

{\colorb\reviewb{7}\label{b7-1} The
 broadness of the matrix $M_2$ obtained in the previous
  paragraph is $O(\frac{n}{h})$.}
Although the thickness of $M_2$ could be $\Omega(n)$, every $h$
consecutive rows of $M_2$ are identical. See
Figure~\ref{figure:figure3}(c) for an example of $M_2$ with $h=3$.  We
condense matrix $M_2$ into an $O(\frac{n}{h})$-row $O(m)$-column
matrix $M^*$.  By $h=\Theta(\log\log n)$, it takes $O(n+m)$ time to
compute the row minima of $M_2$ by applying our first intermediate
algorithm (see Lemma~\ref{lemma:lemma4} in~\S\ref{subsection:step1})
on the condensed matrix $M^*$.  For the rest of the paragraph, let $M$
(with slight abuse of notation) be the input $n\times m$ matrix of
this $O(m+n\log\log n)$-time intermediate algorithm, which is based
upon the above technique of reducing thickness and broadness in a more
complicated manner.  We first partition $M$ into submatrices
$M_1,M_2,\ldots,M_\ell$ with $\ell=O(\log\log n)$ in $O(m+n\log\log
n)$ time.  Specifically, let $h_0,h_1,\ldots,h_\ell$ be a decreasing
sequence of positive integers such that $h_0\geq n$, $h_1<n$,
$h_\ell=1$, and $h_{k-1}=\Theta(h_k^2)$ holds for each
$k=1,2,\ldots,\ell$.  Let $M_k$ be the submatrix of $M$ induced by the
$h_k$-brushed columns that are not $h_{k-1}$-brushed, implying that
$M_k$ has thickness $O(h_{k-1})=O(h_k^2)$. For each $n\times m_k$
matrix $N=M_k$ with $1\leq k\leq\ell$, we derive three $n\times m_k$
matrices $N_1$, $N_2$, and $N_3$ {\colorb\reviewb{10}\label{b10}with brush factor
  $h=h_k$} (again, as in the proof of Lemma~\ref{lemma:lemma7}
in~\S\ref{subsection:step1} and as illustrated by
Figure~\ref{figure:figure3}).  Both $N_1$ and $N_3$ have thickness
$O(h_k)$ and broadness $O(\frac{n}{h_k})$.  {\colorb\reviewb{11}\label{b11}Since every
  $h_k$ consecutive rows of $N_2$ are identical and $N_2$ are not
  $h_{k-1}$-brushed,} we condense $N_2$ into an $O(\frac{n}{h_k})$-row
$m_k$-column matrix $N^*$ with thickness $O(h_k)$ and broadness
$O(\frac{n}{h_k})$.  The row minima of $N_1$, $N^*$, and $N_3$ can be
computed in $O(n+m_k)$ time by Lemma~\ref{lemma:lemma6}.
The row minima of $M_k=N$ can be obtained from those of $N_1$, $N^*$,
and $N_3$ in $O(n)$ time.
Taking entry-wise minima on the row minima of $M_1,M_2,\ldots,M_\ell$,
we have the row minima of $M$ in time $\sum_{1\leq
  k\leq\ell}O(m_k+n)=O(m+n\log\log n)$.

\begin{figure}[t]
\begin{center}
\scalebox{0.83}{
\begin{minipage}{5cm}
\begin{center}
\begin{tabular}{|c||c|c|c|c|c|c|c|}
\hline
$M_0$&1&2&3&4&5\\
\hline
\hline
1 & 9&  &  &  &  \\
\hline
2 & 9&  &  &  &  \\
\hline
\rowcolor[gray]{0.9}3&  &  &  &  &  \\
\hline
4 &  & 7& 5&  &  \\
\hline
5 &  &  & 5&  &  \\
\hline
\rowcolor[gray]{0.9}6&  &  &  &  &  \\
\hline
7 &  &  &  & 6&  \\
\hline
8 &  &  &  &  &  \\
\hline
\rowcolor[gray]{0.9}9&  &  &  &  &  \\
\hline
10&  &  &  &  & 8\\
\hline
11&  &  &  &  & 8\\
\hline
\end{tabular}\\[10pt]
(a)
\end{center}
\end{minipage}
\begin{minipage}{2.5cm}
\begin{center}
\begin{tabular}{|c||c|c|c|c|c|c|c|}
\hline
$M_1$&1 \\
\hline
\hline
1 & 9\\
\hline
2 & 9\\
\hline
\end{tabular}\\[4.8cm]
(b)
\end{center}
\end{minipage}
\begin{minipage}{3cm}
\medskip
\bigskip
\bigskip
\bigskip
\begin{center}
\begin{tabular}{|c||c|c|c|c|c|c|}
\hline
$M_2$&2&3\\
\hline
\hline
4 & 7& 5\\
\hline
5 &  & 5\\
\hline
\end{tabular}\\[3.3cm]
(c)
\end{center}
\end{minipage}
\begin{minipage}{2.5cm}
\medskip
\medskip
\bigskip
\bigskip
\bigskip
\bigskip
\bigskip
\bigskip
\begin{center}
\begin{tabular}{|c||c|c|c|c|c|c|}
\hline
$M_3$&4\\
\hline
\hline
7 & 6\\
\hline
8 &  \\
\hline
\end{tabular}\\[1.8cm]
(d)
\end{center}
\end{minipage}
\begin{minipage}{2.5cm}
\medskip
\medskip
\medskip
\bigskip
\bigskip
\bigskip
\bigskip
\bigskip
\bigskip
\bigskip
\bigskip
\bigskip
\begin{center}
\begin{tabular}{|c||c|c|c|c|c|c|}
\hline
$M_4$&5\\
\hline
\hline
10& 8\\
\hline
11& 8\\
\hline
\end{tabular}\\[10pt]
(e)
\end{center}
\end{minipage}
}
\end{center}
\caption{$M_0$ has no $3$-brushed columns.  The row minima of $M_0$
  can be obtained from combining the row minima of $M_1$, $M_2$,
  $M_3$, and $M_4$. The $\infty$-entries in the matrices are left
  out.}
\label{figure:figure4}
\end{figure}

{\colorb\reviewb{7}\label{b7-2}\reviewb{12}\label{b12}The
thickness of the matrix $M_0$ obtained in the
  paragraph preceding the previous paragraph is $O(h)$.}  
Since $M_0$ has no $h$-brushed columns, one can partition the finite
entries of $M_0$ into $O(h)$-row matrices $M_1,M_2,\ldots,M_\ell$ with
$\ell=O(\frac{n}{h})$ whose overall number of columns is $O(m)$.  See
Figure~\ref{figure:figure4} for an illustration.  (Details are in the
proof of Lemma~\ref{lemma:lemma3} in~\S\ref{subsection:step3}).
{\colorb\reviewb{1}\label{b1} Recursively applying the procedure
  described in the previous two paragraphs on $M_1,\ldots,M_\ell$
  would only lead to an $O((m+n)\log^* n)$-time algorithm.  Instead,}
by
$h=\Theta(\log\log n)$, the row minima of each $O(h)$-row $m_k$-column
matrix $M_k$ can be computed in $O(m_k+\log\log n)$ time by our second
intermediate algorithm (i.e., Algorithm~\ref{algorithm:algorithm1} in
the proof of Lemma~\ref{lemma:lemma8}
in~\S\ref{subsection:step2}). Putting together the row minima of
$M_1,M_2,\ldots,M_\ell$, we solve the row-minima problem on $M_0$ in
time $\sum_{1\leq k\leq \ell}O(m_k+\log\log n)=O(m+n)$.  This
$O(m_k+\log\log n)$-time intermediate algorithm for the row-minima
problem on any $O(\log\log n)$-row $m_k$-column matrix $M_k$ proceeds
iteratively with the help of two data structures.  For each
$j=1,2,\ldots,m_k$, at the end of the $j$-th iteration, the first data
structure keeps the minimum of the first $j$ columns of each row in a
concise manner such that the minima of consecutive rows can be
efficiently updated. Specifically, let $\mu(i)$ be the minimum of the
first $j$ entries of row $i$.  An array $q$ and a binary string $z$
satisfying $q(\textit{pred}(z,i))=\mu(i)$ for all row indices $i$ are
used to represent array $\mu$, where $\textit{pred}(z,i)$ denotes the
largest index $i_1$ with $i_1\leq i$ and $z(i_1)=1$.  The value of
$\mu(i)$ can be obtained from $q(\textit{pred}(z,i))$.  Updating
$\mu(i)$ for all indices $i$ with $\textit{pred}(z,i)=i_1$ to a
smaller value can be done by decreasing $q(i_1)$.  See
Figure~\ref{figure:figure3}(a) for an example of $\mu$, $q$, and $z$
with $j=5$.  If the index $\textit{pred}(z,i)$ for each $i$ were
$O(1)$-time computable and the value of $z(i)$ for each $i$ were
$O(1)$-time updatable, then our Algorithm~\ref{algorithm:algorithm1}
in~\S\ref{subsection:step2} would have been an $O(n+m)$-time algorithm
for the row-minima problem on any $n\times m$ matrix.
However, it is impossible in general to come up with a
polynomial-sized dynamic data structure for binary string $z$ that
supports both $O(1)$-time update on $z(i)$ and $O(1)$-time query
$\textit{pred}(z,i)$~\cite{BeameF02}.
Fortunately, the binary string $z$ needed to represent the minima
array $\mu$ of the $O(h)$-row matrix $M_k$ has only $O(h)=O(\log\log
n)$ bits. Thus, one can pre-compute all possible updates and queries
on $z$ in $o(n)$ time and organize all the pre-computed information in
an $o(n)$-space table capable of supporting each query and update on
$z$ in $O(1)$ time. With the help of this second data structure, our
second intermediate algorithm computes the row minima of each $M_k$
with $1\leq k\leq\ell$ in $O(m_k+\log\log n)$ time.

\subsection{Related work}

On directed graphs with nonnegative weights, Gotthilf and
Lewenstein~\cite{GotthilfL09} gave the best known algorithm, running
in $O(mn+n^2\log\log n)$ time, for the edge-avoiding version of the
replacement-paths problem.  The $O(mn+n^2\log n)$-time algorithm of
running Dijkstra's $O(m+n\log n)$-time algorithm for $O(n)$ times
remains the best known algorithm for the node-avoiding version.
Bernstein~\cite{Bernstein10} gave an algorithm to output
$(1+\epsilon)$-approximate solutions for both versions of
the problem for any positive parameter $\epsilon$.
Hershberger, Suri, and Bhosle~\cite{HershbergerSB07} showed a lower
bound $\Omega(m\sqrt{n})$ on the time complexity of the problem in the
path-comparison model of Karger, Koller, and
Phillips~\cite{KargerKP93}.
The randomized algorithm of Roditty and Zwick~\cite{RodittyZ05} on
unweighted directed graphs runs in $\tilde{O}(m\sqrt{n})$ time. On
directed graphs with integral weights in $\{-W,\ldots,W\}$, Weimann
and Yuster~\cite{WeimannY10,WeimannY13} gave an
$\tilde{O}(Wn^{\omega}+W^{2/3}n^{1+2\omega/3})$-time randomized
algorithm for both versions of the problem, where $\omega$ is the
infimum of all numbers such that multiplying two $n\times n$ matrices
takes $\tilde{O}(n^{\omega})$ time.  The running time was improved to
$\tilde{O}(Wn^{\omega})$ by Vassilevska~Williams~\cite{Williams11},
who~\cite{Williams12} recently reduced the long-standing upper bound
on $\omega$ of Coppersmith and Winograd~\cite{CoppersmithW90} from
$\omega<2.376$ to $\omega<2.3727$.
Recently, 
Grandoni and Vassilevska~Williams~\cite{GrandoniV12}
addressed the single-source version of the problem.
On directed planar graphs with nonnegative weights, the algorithm of
Wulff-Nilsen~\cite{Wulff-Nilsen10} runs in $O(n\log n)$ time,
improving on the $O(n\log^3 n)$-time algorithm of Emek, Peleg, and
Roditty~\cite{EmekPR10} and the $O(n\log^2 n)$-time algorithm of
Klein, Mozes, and Weimann~\cite{KleinMW10}.  Erickson and
Nayyeri~\cite{EricksonN11} extended Wulff-Nilsen's result on
bounded-genus graphs.

Bernstein and Karger~\cite{Bernstein09} addressed the all-pairs
replacement-paths problem by giving an $\tilde{O}(n^2)$-space
$\tilde{O}(mn)$-time data structure capable of answering
$d_{G-v}(r,s)$ for any nodes $r$, $s$, and $v$ of directed graph $G$
in $O(1)$ time. Baswana, Lath, and Mehta~\cite{BaswanaLM12} studied
the single-source and all-pairs replacement-paths problems on directed
planar graphs.  Malik et al.~\cite{MalikMG89} studied replacement
paths that avoid multiple failed edges.
Duan and Pettie~\cite{DuanP09} studied replacement paths that avoid
two failed nodes or edges.  Weimann et al.~\cite{WeimannY13} studied
replacement paths that avoid multiple failed nodes and edges.
Chechik, Langberg, Peleg, and Roditty~\cite{ChechikLPR12} studied near
optimal replacement paths that avoid multiple failed edges.

For the closely related problem of finding $k$ shortest $rs$-paths for
any given nodes $r$ and $s$ of directed graph $G$ with nonnegative
edge weights, Eppstein~\cite{Eppstein98} gave an $O(m+n\log n+k)$-time
algorithm, which may output non-simple paths.  If the output paths are
required to be simple, the best currently known algorithm, also due to
Gotthilf et al.~\cite{GotthilfL09}, uses replacement paths.
Specifically, Roditty and Zwick~\cite{RodittyZ05} showed that the
problem can be reduced to $O(k)$ computations of the second shortest
simple $rs$-path. Therefore, the replacement-paths algorithm of
Gotthilf et al.~yields an $O(kmn + kn^2\log\log n)$-time algorithm
for the problem of finding $k$ shortest simple paths.
See~\cite{Roditty07, Bernstein10, HershbergerMS07} for more results on
this related problem.
See~\cite{KleinMW10,AtallahK92,
  AggarwalKMSW87,NakanoO98,MozesW10,BradfordR96,RamanV94,KaplanMNS12}
for results involving the row-minima problem on matrices with special
structures.

\subsection{Road map}

The rest of the paper is organized as follows.
Section~\ref{section:prelim} gives the preliminaries, including our
$O(n+m)$-time reductions for both versions of the replacement-paths
problem on an $n$-node $m$-edge undirected graph to (1) the
row-minima problem on $O(1)$-concise $n\times m$ matrices and (2)
the single-source shortest-paths problem on $O(n)$-node
$O(m)$-edge undirected graphs.  Both reductions also
work for directed acyclic graphs.  Section~\ref{section:row-minima}
gives our $O(n+m)$-time algorithm for the row-minima problem on any
$O(1)$-concise $n\times m$ matrix and proves
Theorems~\ref{theorem:theorem1} and~\ref{theorem:theorem2}.
Section~\ref{section:conclude} concludes the paper.

\section{Preliminaries}
\label{section:prelim}

\begin{figure}[t]
\centerline{\scalebox{0.83}{\small\input{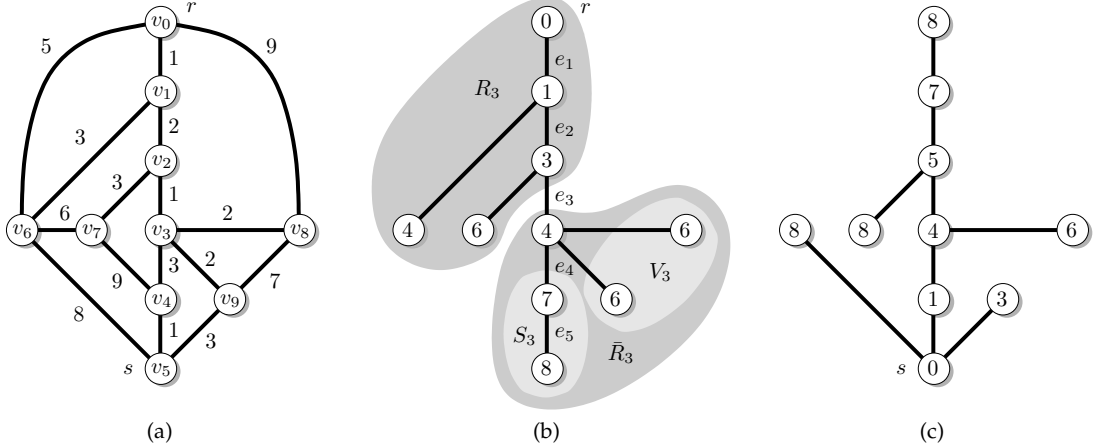}}}
\caption{(a) Graph $G$ in which $(v_0,v_1,\ldots,v_5)$ is a shortest
  $rs$-path $P$.  (b) A shortest-paths tree $T$ of $G$ rooted at
  $r$, in which $P$ consists of edges $e_1,e_2,\ldots,e_5$. The number in each node is its distance from $r$ in $G$.  (c) A
  shortest-paths tree $T'$ of $G$ rooted at $s$. The number in each
  node is its distance to $s$ in $G$.}
\label{figure:figure5}
\end{figure}

Let $|S|$ denote the cardinality of set $S$.  A row (respectively,
column) of a matrix is {\em dummy} if all of its entries are $\infty$.
Given distances $d_G(r,v)$ for all nodes $v$ of an $n$-node $m$-edge
graph $G$, a shortest-paths tree $T$ in $G$ rooted at $r$ that
contains the given shortest $rs$-path $P$ can be obtained in $O(m+n)$
time.  Let $p$ be the number of edges in $P$.  Let $v_0,
v_1,\ldots,v_p$ be the nodes of $P$ from $r=v_0$ to $s=v_p$.  For each
$i=1,2,\ldots,p$, let $e_i$ be edge $v_{i-1}v_i$.  See
Figures~\ref{figure:figure5}(a) and~\ref{figure:figure5}(b) for an
example of $G$, $T$, and $P$.

Subsection~\ref{subsection:reduction-edge} gives our reduction for the
edge-avoiding version.  Subsection~\ref{subsection:reduction-node}
gives our reduction for the node-avoiding version.  
Our reductions are presented in a way that also works for directed
acyclic graphs.
The reductions for directed acyclic graphs hold even with the
existence of negative-weighted edges, while the reductions for
undirected graphs assume nonnegative edge weights.  We comment on
handling negative weights for undirected graphs
in~\S\ref{section:conclude}.

\subsection{A reduction for the edge-avoiding version}
\label{subsection:reduction-edge}

For each node $v$ of $G$, let {\em level} $\lambda(v)$ of $v$ in $T$
be the largest index $i$ such that $v_i$ is on the path of $T$ from
$r$ to $v$.  Levels $\lambda(v)$ for all nodes $v$ of $G$ can be
computed from $T$ in $O(n)$ time.  For each $i=1,2,\ldots,p$,
\begin{itemize}
\item let $R_i$ consist of the nodes $x$
with $\lambda(x)\leq i-1$ and
\item 
let $\bar{R}_i$ consist of the nodes $y$
with $\lambda(y)\geq i$.  
\end{itemize}
That is, $R_i$ (respectively, $\bar{R}_i$) consists of the nodes $v$
that are reachable (respectively, unreachable) from $r$ in $T-e_i$.
See Figure~\ref{figure:figure5}(b) for an illustration of $R_i$ and
$\bar{R}_i$.  For any edge $xy$ of $G$ with $\lambda(x)<\lambda(y)$,
define
\begin{displaymath}
\textit{replacement-cost}_1(x,y)=d_G(r,x)+w(xy)+d_G(y,s).
\end{displaymath}
Since $R_i$ and $\bar{R}_i$ define a cut between nodes $r$ and $s$,
any $rs$-path of $G$ contains some edge $xy$ with $x\in R_i$ and $y\in
\bar{R}_i$.  
We have
\begin{equation}
d_{G-e_i}(r,s)=\min\{\textit{replacement-cost}_1(x,y) \mid \mbox{$x\in
  R_i$, $y\in \bar{R}_i$, and $xy\in G-e_i$}\}
\label{eq:eq1}
\end{equation}
for each $i=1,2,\ldots,p$ (see also,
e.g.,~\cite{NardelliPW01,MalikMG89}).  The {\em edge-replacement
  matrix} of $G$ with respect to $T$ and $P$ is the $p\times m$ matrix
$M$ defined by
\begin{displaymath}
M(i,xy)=
\left\{
\begin{array}{ll}
\textit{replacement-cost}_1(x,y)&\mbox{if $\lambda(x) < i \leq \lambda(y)$ and $e_i \ne xy$}\\
\infty&\mbox{otherwise},
\end{array}
\right.
\end{displaymath}
\reviewb{13}\label{b13-1}%
for each $i=1,2,\ldots,p$ and each edge $xy$ of $G$ with
$\lambda(x) {\colorb\ <\ } \lambda(y)$.  For instance, the matrix in
Figure~\ref{figure:figure1}(a) is the edge-replacement matrix of the
graph $G$ in Figure~\ref{figure:figure5}(a) with respect to the tree
$T$ and path $P$ in Figure~\ref{figure:figure5}(b), where the dummy
columns are omitted.  Let $G'$ be the graph obtained from $G$ by
reversing the direction of each edge of $G$. (This statement handles
the case that $G$ is a directed acyclic graph. For the undirected
case, we simply have $G=G'$.)  
{\colorb\reviewb{14}\label{b14}The distances $d_G(v,s)$ for all
nodes $v$ of $G$ and
a shortest-paths tree $T'$ in $G'$ rooted at $s$
can be obtained from each other in $O(m+n)$ time.}  See Figure~\ref{figure:figure5}(c) for
an example of $T'$.

\begin{lemma}
\label{lemma:lemma1}
The edge-replacement matrix $M$ of $G$ with respect to $T$ and $P$ is
a concise matrix whose concise representation can be obtained from
$G$, $P$, $T$, and $T'$ in $O(n+m)$ time.  Moreover, for each
$i=1,2,\ldots,p$, the minimum of the $i$-th row of $M$ equals
$d_{G-e_i}(r,s)$.
\end{lemma}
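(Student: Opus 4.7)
The plan is to establish three facts separately: conciseness of $M$, an $O(n+m)$-time procedure that produces its concise representation, and the row-minimum identity $\min_{xy} M(i,xy)=d_{G-e_i}(r,s)$.

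\textbf{Conciseness.} I would fix any column indexed by an edge $xy$ with $\lambda(x)<\lambda(y)$ and examine which rows carry finite entries. By the definition of $M$, every finite entry of that column takes the same value $\textit{replacement-cost}_1(x,y)=d_G(r,x)+w(xy)+d_G(y,s)$, independent of $i$. The rows with a finite entry are exactly those $i$ satisfying $\lambda(x)<i\le\lambda(y)$ together with $xy\ne e_i$. For an edge $xy$ not on $P$ this reduces to the contiguous range $\{\lambda(x)+1,\ldots,\lambda(y)\}$; for an on-path edge $e_j$ (where $\lambda(x)=j-1$ and $\lambda(y)=j$) the only candidate row $i=j$ is excluded, so the column is all-$\infty$. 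Either way, each column consists of at most one interval of identical finite values, establishing that $M$ is $1$-concise.

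\textbf{Construction.} Levels $\lambda(v)$ for every node $v$ can be obtained from $T$ by a single $O(n)$-time traversal. I would then scan the $m$ edges of $G$ once; for each edge I spend $O(1)$ time to pick the orientation satisfying $\lambda(x)<\lambda(y)$ (discarding equal-level edges and, in the DAG case, edges whose direction contradicts this inequality), read off the interval endpoints $\lambda(x)+1$ and $\lambda(y)$, and compute the common value from the given distances $d_G(r,x)$, $d_G(y,s)$, and the weight $w(xy)$. The distances $d_G(\cdot,s)$ are recovered from $T'$ in $O(n+m)$ time. This yields the concise representation of $M$ in $O(n+m)$ total time.

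\textbf{Row minima.} By the conciseness analysis, the finite entries of row $i$ are exactly the values $\textit{replacement-cost}_1(x,y)$ for edges $xy\ne e_i$ with $\lambda(x)<i\le\lambda(y)$, which is equivalent to the conditions $x\in R_i$, $y\in\bar{R}_i$, and $xy\in G-e_i$. In the undirected case every cut-crossing edge contributes under its unique orientation $\lambda(x)<\lambda(y)$; in the DAG case any directed edge that crosses from $R_i$ to $\bar{R}_i$ automatically has $\lambda(x)<\lambda(y)$, whereas edges with $\lambda(x)\ge\lambda(y)$ cannot cross this cut and are correctly absent from $M$. The set of finite entries of row $i$ therefore coincides with the set being minimized on the right-hand side of equation~(\ref{eq:eq1}), which gives $\min_{xy} M(i,xy)=d_{G-e_i}(r,s)$.

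I do not anticipate a deep obstacle; the most delicate point is the orientation bookkeeping, namely verifying that on-path edges collapse to dummy columns and that the single convention $\lambda(x)<\lambda(y)$ captures all cut-crossing edges uniformly in both the undirected and DAG settings.
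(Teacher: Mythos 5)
Your proposal is correct and follows essentially the same approach as the paper's own proof: identify each non-dummy column as a single interval of rows $\lambda(x)+1,\dots,\lambda(y)$ carrying the constant value $\textit{replacement-cost}_1(x,y)$, build the concise representation by a single $O(n+m)$ pass over the edges after computing levels, and invoke Equation~(\ref{eq:eq1}) for the row-minimum identity via the equivalence $\lambda(x)<i\le\lambda(y)\iff x\in R_i,\ y\in\bar R_i$. You actually spell out two details the paper leaves implicit — that on-path edges $e_j$ collapse to all-$\infty$ columns, and that in the DAG case any cut-crossing edge automatically satisfies $\lambda(x)<\lambda(y)$ — which are correct and worth noting, but the argument is substantively the same.
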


\begin{proof}
By definition of $M$, if the $xy$-th column of $M$ is not dummy, then
$\lambda(x)\ne \lambda(y)$. Let $x$ and $y$ be the endpoints of such
an edge with $\lambda(x)<\lambda(y)$.  The entries of the $xy$-th
column in rows $\lambda(x)+1,\lambda(x)+2,\ldots, \lambda(y)$ are all
$\textit{replacement-cost}_1(x,y)$. The other entries are all
$\infty$.  Since each column of $M$ consists of at most one interval
of identical finite numbers, $M$ is concise.  Given $G$, $P$, $T$, and
$T'$, values $\textit{replacement-cost}_1(x,y)$ for all edges $xy$ of
$G$ with $\lambda(x)<\lambda(y)$ can be obtained in overall $O(n+m)$
time.  Matrix $M$ can be obtained from $G$, $P$, $T$, and $T'$ in
$O(n+m)$ time.  The minimum of the $i$-th row is the minimum of
$\textit{replacement-cost}_1(x,y)$ over all edges $xy$ of $G$ with
$\lambda(x)<i\leq\lambda(y)$ and $e_i\ne xy$. By definition of $R_i$
and $\bar{R}_i$, edge $xy$ satisfies $\lambda(x)<i\leq\lambda(y)$ if
and only if $x\in R_i$ and $y\in\bar{R}_i$. By
Equation~(\ref{eq:eq1}), the minimum of the $i$-th row of $M$ is
indeed $d_{G-e_i}(r,s)$.  The lemma is proved.
\end{proof}

\subsection{A reduction for the node-avoiding version}
\label{subsection:reduction-node}
Observe that the level $\lambda(v)$ of node $v$ in $T$ is also the
smallest index $i$ such that $v$ is reachable from $r$ in $T-v_{i+1}$.
For each $i=1,\ldots,p-1$, let the nodes of $G-v_i$ be partitioned
into $R_i$, $V_i$, and $S_i$, where
\begin{itemize}
\item 
{\colorb\reviewb{15}\label{b15}$R_i$, as defined in
  \S\ref{subsection:reduction-edge}, consists of the nodes $x$ with
  $\lambda(x)\leq i-1$,}

\item $V_i$ consists of the nodes $x\ne v_i$ with $\lambda(x)=i$, and
\item $S_i$ consists of the nodes $y$ with $\lambda(y)>i$.
\end{itemize}
See Figure~\ref{figure:figure5}(b) for an illustration of $R_i$, $V_i$, and
$S_i$, 
{\colorb where $V_i$ and $S_i$ are depicted\reviewb{16}\label{b16}
 by lighter shaded regions}.
Since $R_i\cup V_i$ and $S_i$ define a cut for nodes $r$ and $s$ in
$G-v_i$, each $rs$-path of $G-v_i$ contains some edge $xy$ with $x\in
R_i\cup V_i$ and $y\in S_i$.  For any node subset $U$ of $G$, let
$G[U]$ denote the subgraph of $G$ induced by $U$.
We have
\begin{equation}
\begin{array}{rcl}
d_{G-v_i}(r,s)
&=&
\min\{d_{G[R_i\cup V_i]}(r,x) + w(xy) + d_{{\colorb G}}(y,s)
\mid \mbox{$x\in R_i \cup V_i$, $y\in S_i$, $xy\in G$}\}\\
&=&\min\{\\
& &\quad\min\{d_G(r,x) + w(xy) + d_G(y,s)
\mid \mbox{$x\in R_i$, $y\in S_i$, $xy\in G$}\},\\
& &\quad\min\{d_{G[R_i\cup V_i]}(r,x) + w(xy) + d_{G}(y,s)
\mid \mbox{$x\in V_i$, $y\in S_i$, $xy\in G$}\}\\
& &\},
\end{array}
\label{eq:eq2}
\end{equation}
where {\colorb\reviewb{2}\label{b2}the first equality is proved by Nardelli et 
al.~\cite[Lemma~3]{NardelliPW03}} and
the second equality follows from the observation that
$d_{G[R_i\cup V_i]}(r,x) = d_G(r,x)$ holds for each node $x\in R_i$.

\begin{figure}[t]
\centerline{\scalebox{0.83}{\small\input{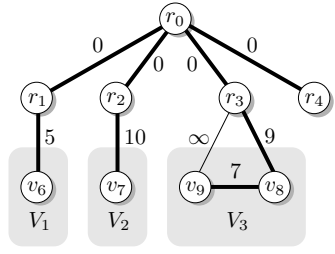}}}
\caption{The graph $G_0$ obtained from the graph $G$ in
  Figure~\ref{figure:figure5}(a) and the tree $T$ and path $P$ in
  Figure~\ref{figure:figure5}(b).  The edges in thick lines form a
  shortest-paths tree $T_0$ of $G_0$ rooted at $r_0$.}
\label{figure:figure6}
\end{figure}

We now define a graph $G_0$ and specify a node $r_0$ of $G_0$ such
that
\begin{equation}
d_{G[R_i\cup V_i]}(r,x)=d_{G_0}(r_0,x)
\label{eq:eq3}
\end{equation}
holds for each $i=1,2,\ldots,p-1$ and each node $x\in V_i$.  For each
$i=1,2,\ldots,p-1$, let $G_i$ be $G[V_i]$ plus one new node $r_i$ and
$|V_i|$ new edges, where for each node $x\in V_i$ the $x$-th new edge
is $r_ix$ with weight
\begin{displaymath}
w(r_ix)=\min\{d_G(r,u)+w(ux)\mid
u\in R_i, ux\in G\}.
\end{displaymath}
Let graph $G_0$ be $G_1\cup G_2\cup\cdots \cup G_{p-1}$ plus a new
node $r_0$ and $p-1$ zero-weighted edges
$r_0r_1,r_0r_2,\ldots,r_0r_{p-1}$.  
{\colora $G_0$ is the disjoint union of $p-1$ induced subgraphs of $G$
  plus a tree with internal nodes $r_0,r_1,\ldots,r_{p-1}$.  
  \reviewa{1}\label{a1}
  For the case that $G$ is a directed acyclic graph, all edges of the
  tree are outgoing toward the disjoint union of the $p-1$ induced
  subgraphs of $G$, which is acyclic. $G_0$ has to be a directed
  acyclic graph.  For the case that $G$ is planar, the disjoint union
  of the $p-1$ induced subgraphs of $G$ is planar.  If edge $r_ix$ for
  some node $x\in V_i$ has finite edge weight, $x$ has at least one
  neighbor of $G$ in $R_i$. Although $G_0$ may not be planar, the
  subgraph of $G_0$ induced by the edges with finite edge weights has
  to be planar.}
Let $T_0$ be a shortest-paths tree
of $G_0$ rooted at $r_0$.  See Figure~\ref{figure:figure6} for an
example.  Observe that $G_0$ is an $O(n)$-node $O(m)$-edge graph,
obtainable in $O(n+m)$ time from $G$ and $T$, such that
Equation~(\ref{eq:eq3}) holds for each $i=1,2,\ldots,p-1$.  For any
edge $xy$ of $G$ with $\lambda(x)<\lambda(y)$, define
\begin{displaymath}
\textit{replacement-cost}_2(x,y)=d_{G_0}(r,x)+w(xy)+d_G(y,s).
\end{displaymath}
The {\em node-replacement matrix} of $G$ with respect to $T$ and $P$
is the $(p-1)\times m$ matrix $N$ defined by
\begin{displaymath}
N(i,xy)\left\{
\begin{array}{ll}
\textit{replacement-cost}_2(x,y)&
\mbox{if $\lambda(x)=i<\lambda(y)$ and $x\ne v_i$}\\
\textit{replacement-cost}_1(x,y)&
\mbox{if $\lambda(x)<i<\lambda(y)$ and $x\ne v_i$}\\ 
\infty&\mbox{otherwise},
\end{array}
\right.
\end{displaymath}
\reviewb{13}\label{b13-2}
for each $i=1,2,\ldots,p-1$ and each edge $xy$ of $G$ with $\lambda(x)
{\colorb\ <\ } \lambda(y)$.  For instance, the matrix in
Figure~\ref{figure:figure1}(b) is the node-replacement matrix of the
graph $G$ in Figure~\ref{figure:figure5}(a) with respect to the tree
$T$ and path $P$ in Figure~\ref{figure:figure5}(b), where the dummy
columns are omitted.

\begin{lemma}
\label{lemma:lemma2}
The node-replacement matrix $N$ of $G$ with respect to $T$ and $P$ is
a $2$-concise matrix whose concise representation can be obtained from
$G$, $P$, $T$, $T'$, and $T_0$ in $O(n+m)$ time. Moreover, for each
$i=1,2,\ldots,p-1$, the minimum of the $i$-th row of $N$ equals
$d_{G-v_i}(r,s)$.
\end{lemma}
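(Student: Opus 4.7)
The plan is to establish three claims in turn: (i) each column of $N$ is the union of at most two maximal intervals of identical finite entries, so $N$ is $2$-concise; (ii) the concise representation is assembled in $O(n+m)$ time from the given data; and (iii) the $i$-th row minimum of $N$ equals $d_{G-v_i}(r,s)$.

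For (i), I would fix an edge $xy$ with $\lambda(x)<\lambda(y)$ and inspect its column. By the definition of $N$, a finite entry can arise only at rows $i$ with $\lambda(x)\le i<\lambda(y)$ and $x\ne v_i$. At $i=\lambda(x)$ (provided $x\ne v_{\lambda(x)}$) the value is $\textit{replacement-cost}_2(x,y)$, while at every $\lambda(x)<i<\lambda(y)$ the value is the constant $\textit{replacement-cost}_1(x,y)$, since $x\ne v_i$ is automatic from $\lambda(v_i)=i>\lambda(x)$. Thus the column splits into at most a singleton $\{\lambda(x)\}$ and the block $[\lambda(x)+1,\lambda(y)-1]$, each filled with a single constant.

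For (ii), I would precompute in $O(n)$ time the arrays $\lambda(v)$ (from $T$), $d_G(r,v)$ (from $T$), $d_G(v,s)$ (from $T'$), and $d_{G_0}(r_0,v)$ (from $T_0$). Then for each edge $xy\in G$ with $\lambda(x)<\lambda(y)$, both $\textit{replacement-cost}_1(x,y)$ and $\textit{replacement-cost}_2(x,y)$ and the two interval descriptors for the $xy$-column are produced in $O(1)$ time. This yields the full concise representation of $N$ in $O(n+m)$ total time.

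For (iii), I would invoke Equation~(\ref{eq:eq2}). Its first inner minimum ranges over edges $xy$ with $x\in R_i$ and $y\in S_i$, equivalently $\lambda(x)<i<\lambda(y)$; the summand $d_G(r,x)+w(xy)+d_G(y,s)$ is exactly $\textit{replacement-cost}_1(x,y)=N(i,xy)$. Its second inner minimum ranges over $x\in V_i$ and $y\in S_i$, equivalently $\lambda(x)=i$, $x\ne v_i$, and $\lambda(y)>i$; by Equation~(\ref{eq:eq3}), $d_{G[R_i\cup V_i]}(r,x)=d_{G_0}(r_0,x)$, so this summand equals $\textit{replacement-cost}_2(x,y)=N(i,xy)$. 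Taking the overall minimum recovers $d_{G-v_i}(r,s)$ on one side and $\min_{xy}N(i,xy)$ on the other. The only subtle point is the boundary row $i=\lambda(x)$: it is precisely the need to replace $d_G(r,\cdot)$ by $d_{G_0}(r_0,\cdot)$ at that one row that forces $N$ to be $2$-concise rather than concise, and that motivates the auxiliary graph $G_0$ constructed in \S\ref{subsection:reduction-node}.
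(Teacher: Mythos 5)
Your proposal is correct and follows essentially the same route as the paper's proof: establish $2$-conciseness by inspecting the column of a fixed edge $xy$ (a singleton value $\textit{replacement-cost}_2$ at row $\lambda(x)$, followed by the constant block $\textit{replacement-cost}_1$ on $[\lambda(x)+1,\lambda(y)-1]$), note that all needed quantities are extractable in $O(1)$ time per edge from $\lambda$, $T$, $T'$, $T_0$, and obtain the row-minimum identity from Equations~(\ref{eq:eq2}) and~(\ref{eq:eq3}). If anything, your treatment is slightly more explicit than the paper's on the degenerate case $x=v_{\lambda(x)}$ (i.e., $x$ on $P$), where the singleton vanishes and the column collapses to one interval.
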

\begin{proof}
By definition of $N$, if the $xy$-th column of $N$ with
$\lambda(x)\leq \lambda(y)$ is not dummy, then $\lambda(x)+1\leq
\lambda(y)$.  The entry of the $xy$-th column in row $\lambda(x)$ is
$\textit{replacement-cost}_2(x,y)$. If $\lambda(x)+2\leq\lambda(y)$,
the entries of the $xy$-th column in rows
$\lambda(x)+1,\lambda(x)+2,\ldots,\lambda(y)-1$ are all
$\textit{replacement-cost}_1(x,y)$. The other entries of the $xy$-th
column are all $\infty$.  Since the finite entries of each column of
$N$ consists of at most two intervals of identical numbers, $N$ is
$2$-concise.  Given $G$, $P$, $T$, $T'$, and $T_0$, values
$\textit{replacement-cost}_1(x,y)$ and
$\textit{replacement-cost}_2(x,y)$ for all edges $xy$ of $G$ with
$\lambda(x)<\lambda(y)$ can be obtained in overall $O(n+m)$
time. Matrix $N$ can be obtained from $G$, $P$, $T$, $T'$, and $T_0$
in $O(n+m)$ time.  By Equations~(\ref{eq:eq2}) and~(\ref{eq:eq3}), we
have
\begin{eqnarray*}
d_{G-v_i}(r,s)&=&\min\{\\
&&\quad\min\{\textit{replacement-cost}_1(x,y) \mid \mbox{$x\in R_i$, $y\in S_i$, $xy\in G$}\},\\
&&\quad\min\{\textit{replacement-cost}_2(x,y) \mid \mbox{$x\in V_i$, $y\in S_i$, $xy\in G$}\}\\
&&\}.
\end{eqnarray*}
For each $i=1,\ldots,p-1$, the minimum of the $i$-th row of $N$ is
indeed $d_{G-v_i}(r,s)$.  The lemma is proved.
\end{proof}

\section{\texorpdfstring{The row minima of an $\boldsymbol{O(1)}$-concise matrix in linear time}{The row minima of an O(1)-concise matrix in linear time}}
\label{section:row-minima}

This section proves Lemma~\ref{lemma:lemma3}.
Theorem~\ref{theorem:theorem1} follows immediately from
Lemmas~\ref{lemma:lemma1},~\ref{lemma:lemma2}, and~\ref{lemma:lemma3}.
Theorem~\ref{theorem:theorem2} follows immediately from
Lemma~\ref{lemma:lemma3} and the analogous versions of
Lemmas~\ref{lemma:lemma1} and~\ref{lemma:lemma2} for directed acyclic
graphs.

\begin{lemma}
\label{lemma:lemma3}
It takes $O(n+m)$ time to compute the row minima of a concisely
represented $O(1)$-concise $n\times m$ matrix.
\end{lemma}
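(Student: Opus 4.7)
The plan is to first reduce the $O(1)$-concise case to the concise (i.e.,~$1$-concise) case, and then handle concise matrices via a decomposition based on a carefully chosen brush factor $h=\Theta(\log\log n)$.

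For the first reduction, given a $k$-concise $n\times m$ matrix $N$ with $k=O(1)$, I distribute the (at most $k$) finite intervals of each column among $k$ concise $n\times m$ matrices $N_1,\ldots,N_k$ whose entry-wise minimum is $N$, as illustrated by Figure~\ref{figure:figure2}. The concise representations of the $N_i$ are obtainable from that of $N$ in $O(m)$ time, and the row minima of $N$ equal the entry-wise minimum of the row minima of $N_1,\ldots,N_k$. This reduces the problem to proving the lemma for concise matrices.

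For a concise $n\times m$ matrix $M$, I fix $h=\Theta(\log\log n)$ and decompose $M$ into the four matrices $M_0,M_1,M_2,M_3$ described in the overview: $M_0$ is the submatrix induced by the non-$h$-brushed columns, while $M_1$, $M_2$, and $M_3$ together cover every $h$-brushed column by splitting its finite entries according to the extreme rows whose index is an integral multiple of $h$. The row minima of $M$ are the entry-wise minimum of the row minima of these four submatrices, each of which I compute as follows. Both $M_1$ and $M_3$ have thickness $O(h)$ and broadness $O(n/h)$, so Lemma~\ref{lemma:lemma6} yields their row minima in $O(n+m+h\cdot(n/h))=O(n+m)$ time. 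For $M_2$, every $h$ consecutive rows are identical, so I condense $M_2$ into an $O(n/h)\times m$ matrix $M^{*}$ in $O(n+m)$ time and invoke Lemma~\ref{lemma:lemma4} (the first intermediate algorithm), which runs in $O(m+(n/h)\log\log n)=O(n+m)$ time by our choice of $h$. For $M_0$, since no column is $h$-brushed, each column's finite entries lie entirely within a single block of fewer than $h$ consecutive rows strictly between two adjacent multiples of $h$; grouping the columns by such block yields $\ell=O(n/h)$ submatrices of $O(h)$ rows each, with total column count $O(m)$. Applying Lemma~\ref{lemma:lemma8} (the second intermediate algorithm) to each in $O(m_k+\log\log n)$ time gives a total of $\sum_k O(m_k+\log\log n)=O(m+(n/h)\log\log n)=O(n+m)$.

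The main obstacle lies in establishing the two intermediate algorithms, especially Lemma~\ref{lemma:lemma8}. Lemma~\ref{lemma:lemma4} follows the same brush-based decomposition applied recursively along a doubly-shrinking cascade $h_0\geq n>h_1>\cdots>h_\ell=1$ satisfying $h_{k-1}=\Theta(h_k^2)$, which gives $\ell=O(\log\log n)$ levels whose residual pieces Lemma~\ref{lemma:lemma6} can handle; this yields the $O(m+n\log\log n)$ bound. Lemma~\ref{lemma:lemma8} is more delicate: for any $O(\log\log n)$-row matrix it must maintain the running prefix row minima via the compact $(q,z)$ representation sketched in the overview, so it needs to support $O(1)$-time predecessor queries and updates on the binary string $z$. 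Since no polynomial-sized data structure can support both operations in constant time in general, the approach exploits the shortness of $z$: because $z$ has only $O(\log\log n)$ bits, all possible updates and queries can be precomputed into an $o(n)$-space lookup table in $o(n)$ time, after which each operation on $z$ takes $O(1)$ time in the word-RAM model. Taken together, the four row-minima computations are combined in $O(n+m)$ time, completing the plan.
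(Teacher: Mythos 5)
Your proposal is correct and follows essentially the same route as the paper's proof of Lemma~\ref{lemma:lemma3}: reduce to the concise case, fix $h=\Theta(\log\log n)$, split into $h$-brushed and non-$h$-brushed columns, handle the former via the $M_1/M_2/M_3$ decomposition (which the paper packages as Lemma~\ref{lemma:lemma7}) plus Lemma~\ref{lemma:lemma4}, and handle the latter by partitioning into $O(n/h)$ matrices of $O(h)$ rows each and invoking Lemma~\ref{lemma:lemma8}. The only difference is cosmetic --- you inline the $M_1,M_2,M_3$ split rather than citing Lemma~\ref{lemma:lemma7} as a black box.
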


As illustrated in Figure~\ref{figure:figure2}, a $k$-concise $n\times
m$ matrix $M$ with $k=O(1)$ can be decomposed in $O(m)$ time into $k$
concise $n\times m$ matrices whose entry-wise minimum is $M$.  To
prove Lemma~\ref{lemma:lemma3}, it suffices to solve the row-minima
problem on any $n\times m$ concise matrix in $O(n+m)$ time.
For the rest of the section, all matrices are concise. Each matrix $M$
is concisely represented by arrays $a_M$, $b_M$, and $c_M$ such that,
for each $i=1,2,\ldots,n$ and each $j=1,2,\ldots,m$, the $(i,j)$-entry
of $M$ can be determined in $O(1)$ time by
\begin{displaymath}
M(i,j)=
\left\{
\begin{array}{ll}
c_M(j)&\mbox{if $a_M(j)\leq i\leq b_M(j)$}\\
\infty&\mbox{otherwise}.
\end{array}
\right.
\end{displaymath}
For instance, if $M$ is the matrix in Figure~\ref{figure:figure3}(a),
then $a_M=(1,2,3,5,7)$, $b_M=(3,9,8,10,11)$, and
$c_M=(9,7,5,6,8)$.  Subscripts $M$ in $a_M$, $b_M$, and $c_M$ can be
omitted, if matrix $M$ is clear from the context.

Subsection~\ref{subsection:step1} proves Lemma~\ref{lemma:lemma4},
which states an $O(m+n\log\log n)$-time algorithm for solving the
row-minima problem on any $n\times m$
matrix. Subsection~\ref{subsection:step2} proves
Lemma~\ref{lemma:lemma8}, which states an $O(m+\log\log n)$-time
algorithm for solving the row-minima problem on any $O(\log\log
n)\times m$ matrix, with the help of an $O(n)$-time pre-computable
$O(n)$-space data structure that supports $O(1)$-time queries and
updates on any $O(\log\log n)$-bit binary string.
Subsection~\ref{subsection:step3} proves Lemma~\ref{lemma:lemma3}
using Lemmas~\ref{lemma:lemma4} and~\ref{lemma:lemma8}.

\subsection{A near-linear-time intermediate algorithm}
\label{subsection:step1}

\begin{lemma}
\label{lemma:lemma4}
It takes $O(m+n\log\log n)$ time to compute the row minima of an
$n\times m$ matrix.
\end{lemma}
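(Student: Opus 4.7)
The plan is to reduce the row-minima problem on an arbitrary $n\times m$ concise matrix $M$ to $\ell=O(\log\log n)$ instances in which each instance can be handled in $O(n+m_k)$ time by Lemma~\ref{lemma:lemma6}, where $m_k$ is the number of columns of the $k$-th instance. The key device is a doubly-exponentially shrinking sequence of brush factors $h_0,h_1,\ldots,h_\ell$ with $h_0\ge n$, $h_\ell=1$, and $h_{k-1}=\Theta(h_k^2)$ for $1\le k\le\ell$; such a sequence has length $\ell=O(\log\log n)$. Letting $M_k$ be the submatrix of $M$ induced by the columns of $M$ that are $h_k$-brushed but not $h_{k-1}$-brushed, the $M_k$'s partition the non-dummy columns of $M$, and because a column failing to be $h_{k-1}$-brushed contains no finite entry at any of the rows $h_{k-1},2h_{k-1},\ldots$, each column of $M_k$ has its unique interval of finite entries of length at most $h_{k-1}-1=O(h_k^2)$, i.e., the thickness of $M_k$ is $O(h_k^2)$.

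For each $M_k$ with $1\le k\le\ell$ I would apply the three-way brushing decomposition of the technical overview with brush factor $h=h_k$, obtaining $n\times m_k$ matrices $N_1$, $N_2$, and $N_3$ whose entry-wise minimum is $M_k$. By construction $N_1$ takes at most the first $h_k$ finite entries of each column up to the first row index that is a multiple of $h_k$, and $N_3$ takes at most the last $h_k-1$ finite entries starting from the row immediately after the last multiple of $h_k$; both therefore have thickness $O(h_k)$ and, since their intervals start or end at multiples of $h_k$, broadness $O(n/h_k)$. Hence Lemma~\ref{lemma:lemma6} computes the row minima of $N_1$ and $N_3$ in time $O(n+m_k+h_k\cdot(n/h_k))=O(n+m_k)$.

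The main obstacle is $N_2$, whose thickness can be $\Omega(n)$. I would exploit the two structural features built into the decomposition: every $h_k$ consecutive rows of $N_2$ are identical, and the thickness of $N_2$ inherits the $O(h_k^2)$ bound from $M_k$. Partition the row set $\{1,\ldots,n\}$ into $\lceil n/h_k\rceil$ blocks of $h_k$ consecutive rows and collapse each block into a single representative row; this yields a condensed matrix $N^*$ with $O(n/h_k)$ rows and $m_k$ columns, obtainable in $O(n/h_k+m_k)$ time directly from the concise representation of $N_2$. Since each interval of $N_2$ of length $O(h_k^2)$ spans at most $O(h_k)$ blocks, $N^*$ has thickness $O(h_k)$, and since the endpoints of intervals in $N_2$ lie between consecutive multiples of $h_k$, the broadness of $N^*$ is $O(n/h_k)$. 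Lemma~\ref{lemma:lemma6} therefore computes the row minima of $N^*$ in time $O(n/h_k+m_k+h_k\cdot(n/h_k))=O(n+m_k)$, and the row minima of $N_2$ follow by broadcasting the minimum of each condensed row back to its $h_k$ original rows in $O(n)$ additional time.

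Taking the entry-wise minimum of the row-minima arrays of $N_1$, $N_2$, and $N_3$ gives the row minima of $M_k$ in time $O(n+m_k)$, and taking the entry-wise minimum over $k=1,\ldots,\ell$ gives the row minima of $M$ in total time
\[
\sum_{k=1}^{\ell}O(n+m_k)=O\bigl(n\ell+m\bigr)=O(m+n\log\log n),
\]
as required. I anticipate that the principal care in writing this up is to verify, from the concise representations of the $a_M$, $b_M$, $c_M$ arrays, that the decomposition into $M_k$'s, the further decomposition into $N_1$, $N_2$, $N_3$, and the condensation of $N_2$ into $N^*$ can all be carried out in time $O(m_k+n/h_k)$ per level (so that the overhead sums to $O(m+n\log\log n)$ rather than $O(\ell\cdot n)$), and to verify the claimed thickness and broadness bounds line up so that Lemma~\ref{lemma:lemma6} yields the $O(n+m_k)$ bound at each level.
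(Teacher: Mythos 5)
Your proposal follows essentially the same route as the paper's proof of this lemma: the doubly-exponentially shrinking sequence $h_0,\ldots,h_\ell$, the partition of columns by smallest brushing level, and the per-level reduction of each $M_k$ via the three-way decomposition to thickness-$O(h_k)$, broadness-$O(n/h_k)$ matrices handled by Lemma~\ref{lemma:lemma6}. (Your explicit $N_1,N_2,N_3$ construction and the condensation to $N^*$ is precisely what Lemma~\ref{lemma:lemma7} packages, applied with brush factor $h=h_k$; the paper's proof simply cites that lemma.) The running-time accounting and the thickness/broadness bounds at each level match the paper.

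The one genuine gap is exactly the one you flag in your final paragraph but do not close: computing the partition into the $M_k$'s within the required time bound. A naive per-column search for the smallest $k$ with the column $h_k$-brushed costs $\Theta(\ell)$ per column, i.e.\ $\Theta(m\log\log n)$ in total, which is \emph{not} $O(m+n\log\log n)$ when $m$ dominates $n$. The paper closes this gap (Statement~1 of its proof) by first applying Lemma~\ref{lemma:lemma5} to sort $M$, and then observing that for columns sharing the same starting row $a_M(j)=i$, Property~S\ref{property:sorted2} makes the sequence of level indices $\kappa(i,1)\geq\kappa(i,2)\geq\cdots\geq\kappa(i,m_i)$ monotone (because each $h_{k'}$ with $k'\leq k$ is an integral multiple of $h_k$); a single merge-style scan then determines all $\kappa(i,j)$ in $O(m_i+\ell)$ time for each $i$, so the entire partition is computed in $O(m+n\ell)=O(m+n\log\log n)$ time. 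To make your argument complete, you should sort $M$ up front via Lemma~\ref{lemma:lemma5} and supply this monotone-scan argument for the partition step.
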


This subsection proves Lemma~\ref{lemma:lemma4}, which requires
Lemmas~\ref{lemma:lemma5},~\ref{lemma:lemma6},
and~\ref{lemma:lemma7}. An $n\times m$ matrix $M$ is {\em sorted} if
the following properties hold, where (a) $M_i$ is the submatrix of $M$
induced by the columns whose indices $j$ satisfy $a_M(j)=i$, and (b)
$m_i$ is the number of columns in $M_i$.
\begin{enumerate}[\em Property~S1:]
\item 
\label{property:sorted1}
$a_M(1)\leq a_M(2)\leq\cdots\leq a_M(m)$.
\item 
\label{property:sorted2}
$b_{M_i}(1)\leq b_{M_i}(2)\leq \cdots \leq b_{M_i}(m_i)$ holds for
each $i=1,\ldots,n$.
\end{enumerate}
That is, if $M$ is sorted, then $(a_M(1),b_M(1)),
(a_M(2),b_M(2)),\ldots, (a_M(m),b_M(m))$ are in lexicographically
non-decreasing order.  For instance, the matrices $M$ in
Figures~\ref{figure:figure1}(a) and~\ref{figure:figure3}(a), 
the matrix $M_0$ in Figure~\ref{figure:figure4}(a), 
and the matrix $M_9$ in
Figure~\ref{figure:figure7} are sorted.  The matrix $N_1$ in
Figure~\ref{figure:figure2} is not sorted, since the column with index
$v_0v_8$ is not the third column.

\begin{lemma}
\label{lemma:lemma5}
It takes $O(n+m)$ time to reorder the columns of an $n\times m$ matrix
such that the resulting matrix is sorted.
\end{lemma}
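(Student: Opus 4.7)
The plan is to observe that since the columns are concisely represented by arrays $a_M$, $b_M$, and $c_M$, reordering the columns of $M$ amounts to computing a permutation $\pi$ of $\{1,2,\ldots,m\}$ that lexicographically sorts the pairs $(a_M(j),b_M(j))$ and then applying $\pi$ to the three arrays $a_M$, $b_M$, and $c_M$. Properties~S\ref{property:sorted1} and~S\ref{property:sorted2} together are exactly the assertion that the sequence of pairs $(a_M(\pi(1)),b_M(\pi(1))),\ldots,(a_M(\pi(m)),b_M(\pi(m)))$ is lexicographically non-decreasing, so finding such a $\pi$ is the only substantive task.

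Since every entry of $a_M$ and $b_M$ lies in $\{1,2,\ldots,n\}$, one can compute $\pi$ by two passes of counting (bucket) sort. First I would perform a stable counting sort of the indices $1,2,\ldots,m$ by the key $b_M(\cdot)$, producing an auxiliary permutation $\pi'$; this takes $O(n+m)$ time using an array of $n$ buckets indexed by possible $b$-values. Second, I would perform a stable counting sort of the sequence $\pi'(1),\pi'(2),\ldots,\pi'(m)$ by the key $a_M(\cdot)$, producing the final permutation $\pi$ in an additional $O(n+m)$ time. Stability of the second pass ensures that columns with equal $a$-values retain the order imposed by the first pass, which is precisely the non-decreasing order in $b_M$ required by Property~S\ref{property:sorted2}; the second pass itself establishes Property~S\ref{property:sorted1}.

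Finally, I would apply $\pi$ to the three arrays in $O(m)$ time by writing $(a_M(\pi(j)),b_M(\pi(j)),c_M(\pi(j)))$ into position $j$ of fresh arrays and returning these as the new concise representation. The total running time is $O(n+m)$, matching the claim. There is no real obstacle here: the argument is just the standard observation that lexicographic sorting of $m$ pairs of integers in $\{1,\ldots,n\}$ is done in $O(n+m)$ time by two rounds of stable counting sort; the only thing to check is that the concise representation is closed under column permutation, which is immediate from the definitions of $a_M$, $b_M$, and $c_M$.
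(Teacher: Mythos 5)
Your proof is correct and takes essentially the same approach as the paper, which simply invokes counting sort on keys in $\{1,\ldots,n\}$; you have spelled out the standard two-pass radix-style argument (stable counting sort by $b_M$ then by $a_M$) that the paper leaves implicit.
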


\begin{proof}
Since $a(j)$ and $b(j)$ for all indices $j=1,2,\ldots,m$ are positive
integers in $\{1,2,\ldots,n\}$, the lemma is straightforward by
counting sort~(see, e.g.,~\cite{CormenLRS09}).
\end{proof}

\begin{figure}[t]
\begin{center}
\scalebox{0.83}{
\begin{tabular}{|c||c|c|c|c|c|c|c|c|c|c||c|}
\hline
$M_9$& 1& 2& 3& 4& 5& 6& 7& 8& 9&10&row minimum\\
\hline
\hline
9    &{\cellcolor[gray]{0.9}\it\color{blue} 3}&95&25&66&32&76&51&88&76&81&{\cellcolor[gray]{0.9}
3}\\
\hline
10   &  &{\it\color{blue}95}&25&66&32&76&51&88&76&81&{\cellcolor[gray]{0.9}25}\\
\hline
11   &  &  &{\cellcolor[gray]{0.9}\it\color{blue} 25}&66&32&76&51&88&76&81&{\cellcolor[gray]{0.9}25}\\
\hline
12   &  &  &  &{\it\color{blue}66}&32&76&51&88&76&81&{\cellcolor[gray]{0.9}32}\\
\hline
13   &  &  &  &  &{\cellcolor[gray]{0.9}\it\color{blue} 32}&76&51&88&76&81&{\cellcolor[gray]{0.9}32}\\
\hline
14   &  &  &  &  &{\cellcolor[gray]{0.9}\it\color{blue} 32}&76&51&88&76&81&{\cellcolor[gray]{0.9}32}\\
\hline
15   &  &  &  &  &  &{\it\color{blue} 76}&{\cellcolor[gray]{0.9}\it\color{blue}51}&88&76&81&{\cellcolor[gray]{0.9}51}\\
\hline
16   &  &  &  &  &  &  &  &{\it\color{blue} 88}&76&81&{\cellcolor[gray]{0.9}76}\\
\hline
17   &  &  &  &  &  &  &  &  &{\cellcolor[gray]{0.9}\it\color{blue} 76}&{\it\color{blue}81}&{\cellcolor[gray]{0.9}76}\\
\hline
\end{tabular}
}
\end{center}
\caption{A sorted $n$-row $m_i$-column thickness-$\theta$ matrix $M_i$
  with $n=17$, $i=9$, $m_i=10$, and $\theta=9$. The dummy rows of
  $M_i$ are omitted. The $\infty$-entries are left out. The italic entries form the
  lower-left boundary of the finite entries.}
\label{figure:figure7}
\end{figure}

Define
\begin{eqnarray*}
\textit{thickness}(M)&=&\max\{b_M(j)-a_M(j)+1\mid 1\leq j\leq m\};\\
\textit{broadness}(M)&=&\min\{
|\{a_M(1),a_M(2),\ldots,a_M(m)\}|,
|\{b_M(1),b_M(2),\ldots,b_M(m)\}|
\}.
\end{eqnarray*}
For instance, we have $\textit{thickness}(M)=\textit{broadness}(M)=4$
for the matrix $M$ in Figure~\ref{figure:figure1}(a) and
$\textit{thickness}(M_9)=9$ and $\textit{broadness}(M_9)=1$ for the
matrix $M_9$ in Figure~\ref{figure:figure7}.

\begin{lemma}
\label{lemma:lemma6}
It takes $O(n+m+\textit{thickness}(M)\cdot\textit{broadness}(M))$ time
to compute the row minima of an $n\times m$ matrix $M$.
\end{lemma}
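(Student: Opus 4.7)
\emph{Proof plan.} By Lemma~\ref{lemma:lemma5}, I may assume (after $O(n+m)$-time preprocessing) that $M$ is sorted, i.e., Properties~S1 and~S2 hold. Let $\theta=\textit{thickness}(M)$ and $\beta=\textit{broadness}(M)$, and let $\alpha$ and $\gamma$ be the number of distinct values in the arrays $a_M$ and $b_M$, respectively, so that $\beta=\min(\alpha,\gamma)$; both counts are computable in $O(m)$ time.

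The plan is to partition the columns of $M$ into at most $\beta$ groups and process each group in $O(m_i+\theta)$ time, where $m_i$ is its column count. Consider first the case $\alpha\le\gamma$. By Property~S1, the columns sharing a common $a$-value form a contiguous block; let $M_i$ be the block with $a_M(j)=i$. By Property~S2, the columns of $M_i$, call them $j_1,\ldots,j_{m_i}$, satisfy $b_M(j_1)\le\cdots\le b_M(j_{m_i})$. Since every column of $M_i$ has at most $\theta$ finite entries, all such entries lie in rows $i,i+1,\ldots,\min(n,i+\theta-1)$.

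For each group $M_i$, I first compute the suffix minima $s_k=\min\{c_M(j_l):k\le l\le m_i\}$ in $O(m_i)$ time. For a row $r\in\{i,\ldots,\min(n,i+\theta-1)\}$, the columns of $M_i$ whose finite interval covers row $r$ are exactly $\{j_k,\ldots,j_{m_i}\}$, where $k(r)$ is the smallest index with $b_M(j_k)\ge r$; hence the row-$r$ minimum of $M_i$ equals $s_{k(r)}$ (or $+\infty$ if no such $k$ exists). Since $k(r)$ is non-decreasing in $r$, a single two-pointer sweep computes $s_{k(r)}$ for all relevant $r$ in $O(m_i+\theta)$ time. Maintaining a global array $\mu\in\{+\infty\}^n$ and updating $\mu(r)\leftarrow\min(\mu(r),s_{k(r)})$ group by group yields the row minima of $M$. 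Summing $O(m_i+\theta)$ over the at most $\alpha=\beta$ nonempty groups, plus the $O(n+m)$ initialization, gives the claimed $O(n+m+\beta\theta)$ bound.

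The remaining case $\gamma<\alpha$ is symmetric: re-sort by $(b_M,a_M)$ in $O(n+m)$ time by counting sort, group by $b$-value so that each group has its columns in non-decreasing order of $a$-value, replace suffix minima by prefix minima, and sweep $r$ from the common $b$-value downward. The only potential obstacle is the monotonicity needed for the two-pointer sweep, which follows immediately from Property~S2 (and its $(b_M,a_M)$-analogue); everything else is bookkeeping.
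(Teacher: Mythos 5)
Your proof is correct and takes essentially the same approach as the paper: after sorting via Lemma~\ref{lemma:lemma5}, group columns by their common $a$-value (or $b$-value), exploit Property~S\ref{property:sorted2} to process each group in $O(m_i+\theta)$ time, and sum over the $\beta$ groups. The only cosmetic differences are that you compute explicit suffix/prefix minima plus a two-pointer sweep where the paper describes an equivalent ``right-to-left and bottom-up'' traversal of the lower-left boundary, and you handle the $\gamma<\alpha$ case by re-sorting on $(b_M,a_M)$ where the paper simply reverses the row order.
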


\begin{proof}
Let $\theta=\textit{thickness}(M)$ and $\beta=\textit{broadness}(M)$.
Subscripts $M$ of $a_M$ and $b_M$ in the proof are omitted.  We prove
the lemma for the case with $\beta=|\{a(1),a(2),\ldots,a(m)\}|$.  The
case with $\beta=|\{b(1),b(2),\ldots,b(m)\}|$ can be proved by
reversing the row order of $M$.  We first apply
Lemma~\ref{lemma:lemma5} to have $M$ sorted in $O(n+m)$ time.  For
each $i=1,2,\ldots,n$, let $M_i$ be the submatrix of $M$ induced by
columns whose indices $j$ satisfy $a(j)=i$. Let $m_i$ be the number of
columns in $M_i$.  For each of the $\beta$ indices $i$ with $m_i \geq
1$, the non-dummy rows of submatrix $M_i$ are all in rows
$i,i+1,\ldots,i+\theta-1$.  Since $a(j)=i$ holds for all column
indices $j$ of $M_i$, the sequence of minima of rows
$i,i+1,\ldots,i+\theta-1$ of $M_i$ is non-decreasing.  By
Property~S\ref{property:sorted2} of $M$, the minima of the $\theta$ or
less non-dummy rows of $M_i$ can be computed in $O(m_i+\theta)$ time
by a right-to-left and bottom-up traversal of the lower-left boundary
of the finite entries.  See Figure~\ref{figure:figure7} for an
illustration.  The row minima of $M$ can be obtained from the row
minima of the non-dummy rows of the $\beta$ matrices $M_i$ with
$m_i\geq 1$ in $O(n+\theta \cdot \beta)$ time.  The row-minima problem
on $M$ can thus be solved in $O(n+m+\theta\cdot\beta)$ time.  The
lemma is proved.
\end{proof}

For any positive integer $h$, the $j$-th column of $M$ is {\em
  $h$-brushed} if interval $[a_M(j),b_M(j)]$ contains at least one
integral multiple of~$h$. It takes $O(1)$ time to determine from
$a_M(j)$ and $b_M(j)$ whether the $j$-th column of $M$ is $h$-brushed
or not.

\begin{lemma}
\label{lemma:lemma7}
If $M$ is an $n\times m$ matrix whose columns are all $h$-brushed,
then the row-minima problem on $M$ can be reduced in $O(n+m)$ time to
the row-minima problem on an $O(\frac{n}{h})\times m$ matrix $M^*$
with
$\textit{thickness}(M^*)=O(\frac{1}{h}\cdot\textit{thickness}(M))$ and
$\textit{broadness}(M^*)=O(\frac{n}{h})$.
\end{lemma}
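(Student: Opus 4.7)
The plan is to split each $h$-brushed column of $M$ at its extremal integral multiples of $h$, producing three $n\times m$ matrices $M_1$, $M_2$, and $M_3$ whose entry-wise minimum equals $M$. The side matrices $M_1$ and $M_3$ will have thickness $O(h)$ and broadness $O(n/h)$, so Lemma~\ref{lemma:lemma6} computes their row minima directly in $O(n+m)$ time. The middle matrix $M_2$ will be row-periodic in blocks of $h$, so I can condense it into the desired $O(n/h)$-row matrix $M^*$; once the row minima of $M^*$ are available, the row minima of $M_2$, and hence of $M$, follow by expansion.

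For each column $j$, let $\alpha_j$ and $\beta_j$ denote the smallest and largest integral multiples of $h$ lying in $[a_M(j),b_M(j)]$; both exist because column $j$ is $h$-brushed. Define the $j$-th columns of $M_1$, $M_2$, $M_3$ to equal $c_M(j)$ on the rows $[a_M(j),\alpha_j]$, $[\alpha_j+1,\beta_j]$, and $[\beta_j+1,b_M(j)]$, respectively, and to be $\infty$ elsewhere. Each of $M_1$, $M_2$, $M_3$ is concise, its concise representation is computable from that of $M$ in $O(n+m)$ time, and the three matrices have entry-wise minimum $M$. The thickness of $M_1$ is at most $h$ because $\alpha_j<a_M(j)+h$, and its end indices $\alpha_j$ all lie in $\{h,2h,\ldots,\lfloor n/h\rfloor h\}$, so its broadness is $O(n/h)$; symmetric reasoning bounds $M_3$. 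Lemma~\ref{lemma:lemma6} then yields the row minima of $M_1$ and $M_3$ in $O(n+m+h\cdot n/h)=O(n+m)$ time.

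For $M_2$, the finite entries of column $j$ occupy exactly $[\alpha_j+1,\beta_j]$, an interval whose endpoints are $\equiv 1$ and $\equiv 0\pmod{h}$. Hence every block $\{(k-1)h+1,\ldots,kh\}$ of $h$ consecutive rows is either entirely finite or entirely $\infty$ in each column of $M_2$, so its $h$ rows coincide. I define $M^*$ as the $\lfloor n/h\rfloor\times m$ matrix obtained by keeping row $kh$ of $M_2$ as the $k$-th row of $M^*$. Its $j$-th column has finite entries on $[\alpha_j/h+1,\beta_j/h]$ with value $c_M(j)$, so each column has length $(\beta_j-\alpha_j)/h\leq\textit{thickness}(M)/h$ and start/end indices from a set of size $O(n/h)$, giving $\textit{thickness}(M^*)=O(\textit{thickness}(M)/h)$ and $\textit{broadness}(M^*)=O(n/h)$. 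Once the row minima of $M^*$ are computed, the minimum of row $i$ of $M_2$ equals that of row $\lceil i/h\rceil$ of $M^*$, and taking the entry-wise minimum with the row minima of $M_1$ and $M_3$ in $O(n)$ time produces the row minima of $M$.

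The main bookkeeping obstacle is to verify both the block-periodicity of $M_2$ and the identity $M=\min\{M_1,M_2,M_3\}$; both reduce to the extremality of $\alpha_j$ and $\beta_j$ in $[a_M(j),b_M(j)]$, but the degenerate cases $\alpha_j=\beta_j$ (when $M_2$'s $j$-th column is dummy), $\alpha_j=a_M(j)$, and $\beta_j=b_M(j)$ still deserve a routine check to confirm that the degenerate columns of $M_1$, $M_2$, and $M_3$ contribute correctly to the entry-wise minimum.
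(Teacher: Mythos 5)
Your proposal is correct and follows essentially the same route as the paper: you split each column at $\alpha_j=h\lceil a_M(j)/h\rceil$ and $\beta_j=h\lfloor b_M(j)/h\rfloor$ (your extremal multiples of $h$ coincide with the paper's formulas), observe that $M_1$ and $M_3$ have thickness $O(h)$ and broadness $O(n/h)$ so Lemma~\ref{lemma:lemma6} handles them directly, and condense the block-periodic $M_2$ into $M^*$. The only difference is presentational; the decomposition, the thickness/broadness bookkeeping, and the condensing step all match the paper's proof.
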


\begin{proof}
Let $M_1$, $M_2$, and $M_3$ be the following three $n\times m$
matrices, obtainable from $M$ in $O(m)$ time, whose entry-wise minimum
is $M$. For each $i=1,2,\ldots,n$ and each $j=1,2,\ldots,m$, let
\begin{eqnarray*}
M_1(i,j)&=&
\left\{
\begin{array}{ll}
M(i,j)&\mbox{if $a_M(j)\leq i \leq 
h\cdot\left\lceil\frac{a_M(j)}{h}\right\rceil$}\\
\infty&\mbox{otherwise}
\end{array}
\right.\\
M_2(i,j)&=&
\left\{
\begin{array}{ll}
M(i,j)&\mbox{if $h\cdot\left\lceil\frac{a_M(j)}{h}\right\rceil+1\leq i \leq h\cdot\left\lfloor\frac{b_M(j)}{h}\right\rfloor$}\\
\infty&\mbox{otherwise}
\end{array}
\right.\\
M_3(i,j)&=&
\left\{
\begin{array}{ll}
M(i,j)&\mbox{if $h\cdot\left\lfloor\frac{b_M(j)}{h}\right\rfloor+1\leq i \leq b_M(j)$}\\
\infty&\mbox{otherwise}.
\end{array}
\right.
\end{eqnarray*}
See Figure~\ref{figure:figure3} for an example.  Since each
$b_{M_1}(j)$ with $1\leq j\leq m$ is an integral multiple of $h$, we
have $\textit{broadness}(M_1)=O(\frac{n}{h})$.  Since each
$a_{M_3}(j)-1$ with $1\leq j\leq m$ is an integral multiple of $h$, we
have $\textit{broadness}(M_3)=O(\frac{n}{h})$.  By
Lemma~\ref{lemma:lemma6} with $\textit{thickness}(M_1)=O(h)$ and
$\textit{thickness}(M_3)=O(h)$, the row-minima problems on $M_1$ and
$M_3$ can be solved in $O(n+m)$ time.  Every $h$ consecutive rows of
$M_2$ are identical. Specifically, for each positive index $t$, rows
$(t-1)\cdot h+1, (t-1)\cdot h+2,\ldots,t\cdot h$ of $M_2$ are
identical. Let $M_2$ be condensed into an $O(\frac{n}{h})\times m$
matrix $M^*$ by merging every $h$ consecutive rows of $M_2$ into a
single row.  We have
$\textit{thickness}(M^*)=O(\frac{1}{h}\cdot\textit{thickness}(M))$ and
$\textit{broadness}(M^*)=O(\frac{n}{h})$.  The row minima of $M_2$ can
be obtained from those of $M^*$ in $O(n)$ time.  The lemma is proved.
\end{proof}

We are ready to prove Lemma~\ref{lemma:lemma4}.
\begin{proof}[Proof of Lemma~\ref{lemma:lemma4}]
Let $M$ be the input $n\times m$ matrix.  We first apply
Lemma~\ref{lemma:lemma5} to have $M$ sorted in $O(n+m)$ time.  Let
$\ell=1+\left\lceil\log_2\log_2 n\right\rceil$.  Assume $n\geq 2$
without loss of generality, so $\ell\geq 1$.  Define a decreasing
sequence $h_0,h_1,\ldots,h_{\ell}$ of positive integers as follows.
\begin{displaymath}
h_k=\left\{
\begin{array}{ll}
2^{2^{\ell-k-1}}&\mbox{if $0\leq k\leq \ell-1$}\\
1&\mbox{if $k=\ell$}.
\end{array}
\right.
\end{displaymath}
Each $h_k$ is a power of two. One can verify that $h_0\geq n$,
$h_1<n$, $h_{\ell-1}=2$, and $h_{k-1}=h_k^2$ holds for each
$k=1,2,\ldots,\ell-1$.  For each $k=1,2,\ldots,\ell$, if $k$ is the
smallest positive integer such that the $j$-th column of $M$ is
$h_k$-brushed, then let $j\in J_k$.  By $h_{\ell}=1$, sets
$J_1,J_2,\ldots,J_{\ell}$ form a disjoint partition of the indices of
the non-dummy columns of $M$.
For the matrix in Figure~\ref{figure:figure7} with $n=17$, we have
$\ell=4$, $h_0=256$, $h_1=16$, $h_2=4$, $h_3=2$, $h_4=1$, $J_4=\{1\}$,
$J_3=\{2,3\}$, $J_2=\{4,5,6,7\}$, and $J_1=\{8,9,10\}$.
For each $k=1,2,\ldots,\ell$, let $j_k=|J_k|$.  By
$j_1+j_2+\cdots+j_{\ell}=m$, the lemma follows immediately from the
following two statements.
\begin{enumerate}[\em {Statement}~1:]
\item Sets $J_1,J_2,\ldots,J_{\ell}$ can be obtained from $M$ in
  $O(m+n\cdot \ell)$ time.

\item For each $k=1,2,\ldots,\ell$, the row-minima problem on the
  submatrix of $M$ induced by the columns with indices in $J_k$ can be
  solved in $O(n+j_k)$ time.
\end{enumerate}

Statement~1.  For each $i=1,2,\ldots,n$, let $M_i$ be the submatrix of
$M$ induced by the columns whose indices $j$ satisfy $a_M(j)=i$.  Let
$m_i$ be the number of columns in $M_i$.  For each $i=1,2,\ldots,n$
and each $j=1,2,\ldots,m_i$, let $\kappa(i,j)$ be the index $k$ such
that $J_k$ contains the index of the column of $M$ that is the $j$-th
column of $M_i$.  Let $\kappa(i,0)=\ell$.
Since $h_1,h_2,\ldots,h_k$ are all integral multiples of $h_k$ for
each $k=1,2,\ldots,\ell$, Property~S\ref{property:sorted2} of $M$
implies $\kappa(i,0)\geq \kappa(i,1)\geq\cdots\geq \kappa(i,m_i)\geq
1$. For each $j=1,2,\ldots,m_i$, to determine $\kappa(i,j)$, it
suffices to look for the first integer $k$ starting from
$\kappa(i,j-1)$ down to $1$ such that the $j$-th column of $M_i$ is
$h_k$-brushed but not $h_{k-1}$-brushed.  Therefore, it takes overall
$O(m_i+\ell)$ time to compute indices
$\kappa(i,1),\kappa(i,2),\ldots,\kappa(i,m_i)$.  Sets
$J_1,J_2,\ldots,J_{\ell}$ can thus be obtained in $O(m+n\cdot\ell)$
time. Statement~1 is proved.

Statement~2.  Let $M_k$ be the submatrix of $M$ induced by the columns
with indices in $J_k$.  If $j\in J_k$, then the $j$-th column of $M$
is not $h_{k-1}$-brushed, implying $\textit{thickness}(M_k) < h_{k-1}
= O(h_k^2)$.  By Lemma~\ref{lemma:lemma7}, the row-minima problem on
$M_k$ can be reduced in $O(n+j_k)$ time to the row-minima problem on
an $O(\frac{n}{h_k})\times j_k$ matrix $M_k^*$ with
$\textit{thickness}(M_k^*)=O(\textit{thickness}(M_k)\cdot\frac{1}{h_k})=O(h_k)$
and $\textit{broadness}(M_k^*)=O(\frac{n}{h_k})$.  By
Lemma~\ref{lemma:lemma6}, the row minima of $M_k^*$ can be computed in
time $O(\frac{n}{h_k}+j_k+h_k\cdot \frac{n}{h_k})=O(n+j_k)$.
Therefore, the row minima of $M_k$ can be computed in $O(n+j_k)$ time.
Statement~2 is proved. The lemma is proved.
\end{proof}

\subsection{A linear-time intermediate algorithm for matrices with very few rows}
\label{subsection:step2}
This subsection proves the following lemma.  

\begin{lemma}
\label{lemma:lemma8}
Let $n$ be a given positive integer. Let $h=\max(1,\lceil\log_2\log_2
n\rceil)$.  It takes $O(n)$ time to compute an $O(n)$-space data
structure, with which the row minima of any $h\times m$ matrix can be
computed in $O(h+m)$ time.
\end{lemma}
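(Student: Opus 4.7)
The plan has two parts. First, I would design a lookup-table data structure that supports the primitive operations on the binary string $z$---namely $\textit{pred}(z,i)$ queries and bit-set/bit-clear updates---in $O(1)$ worst-case time. Second, I would specify Algorithm~1, an iterative column-by-column procedure that maintains the partial row-minima array $\mu$ via the $(q,z)$ representation and that performs $O(h+m)$ primitive operations on $z$ in total. Combined, these give the claimed $O(h+m)$ running time.

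For the lookup tables, I would exploit the fact that $h\leq\lceil\log_2\log_2 n\rceil+O(1)$, so that there are only $2^h=O(\log n)$ possible strings of length $h$. Encoding each such string as an integer in $\{0,1,\ldots,2^h-1\}$, I would precompute, for every pair $(z,i)$ with $z\in\{0,\ldots,2^h-1\}$ and $i\in\{1,\ldots,h\}$: (a) the value $\textit{pred}(z,i)$; (b) an analogous successor value $\textit{succ}(z,i)$ used for forward traversal of the 1-bits; and (c) the integer encodings of the two strings obtained by setting or clearing bit $i$ of $z$. Each table has $O(2^h\cdot h)=O(\log n\cdot\log\log n)$ entries, and all entries for a fixed $z$ can be filled by a single $O(h)$-time left-to-right scan of $z$, for a total precomputation cost of $O(2^h\cdot h)=o(n)$ time and $O(n)$ space, well within the budget of the lemma. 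Each subsequent call to a primitive operation on $z$ reduces to a single table lookup and so takes $O(1)$ time.

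For Algorithm~1, the invariant is that after processing the first $j$ columns, $q(\textit{pred}(z,i))=\mu(i)$ for every row index $i$, where $\mu(i)$ is the minimum of the first $j$ entries of row $i$. When column $j$ arrives with interval $[a_j,b_j]$ and value $c_j$, the algorithm (i) inserts breakpoints at $a_j$ and $b_j+1$ (if not already present) via two bit-set updates, initializing the new $q$-values to their pre-insertion counterparts obtained by predecessor queries; (ii) traverses the breakpoints inside $[a_j,b_j]$ via successor queries, replacing each $q$-value with its minimum against $c_j$; and (iii) merges adjacent breakpoints with matching $q$-values via bit-clear updates. The final $h$ row minima are read off from $(q,z)$ in $O(h)$ time by $h$ predecessor queries, one per row.

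The main obstacle I anticipate is showing rigorously that the total number of primitive operations performed by Algorithm~1 across all $m$ columns is $O(h+m)$. A charging argument readily bounds the insertions and merges by $O(h+m)$, since each column creates at most two new breakpoints in step~(i), and each breakpoint is removed at most once during its lifetime in step~(iii). Bounding the scanning work in step~(ii) likewise by $O(h+m)$ in total, however, requires either an additional sub-invariant that allows the scan to halt as soon as it reaches a breakpoint whose $q$-value is already at most $c_j$, or an auxiliary $O(n)$-space table keyed on $z$ that identifies the leftmost breakpoint in a range whose $q$-value exceeds a given threshold. Spelling out this bookkeeping, together with the boundary cases ($b_j+1>h$, an inserted breakpoint coinciding with an existing one, and the initially empty $z$), completes the proof.
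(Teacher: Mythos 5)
The lookup-table part of your plan is sound and matches the paper. The real difficulty, as you correctly anticipate, is bounding the scanning work inside each column's interval by $O(h+m)$ total, and here your proposal has a genuine gap: neither of your two suggested fixes, as stated, actually closes it.

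The missing idea is to first \emph{sort the columns by their starting row index} $a(j)$ (a counting sort, as in Lemma~\ref{lemma:lemma5}, which costs only $O(h+m)$ for an $h\times m$ matrix). With this ordering, when column $j$ is processed the partial minima $\mu_{j-1}(a(j))\le \mu_{j-1}(a(j)+1)\le\cdots\le\mu_{j-1}(b(j))$ are \emph{non-decreasing} within $[a(j),b(j)]$ (because every earlier column whose interval meets row $i\ge a(j)$ also covers all rows between $a(j)$ and $i$). This monotonicity is exactly the sub-invariant that justifies early termination: scanning \emph{right-to-left} from $b(j)$ by predecessor queries, the algorithm removes each breakpoint whose $q$-value exceeds $c(j)$ (chargeable once to that breakpoint's creation) and halts as soon as it meets a breakpoint with $q$-value at most $c(j)$, since everything further left is then automatically $\le c(j)$. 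Your proposal never mentions sorting, and your step~(ii) as written traverses the breakpoints left-to-right via successor queries while your proposed halting rule (``stop when a $q$-value is already $\le c_j$'') matches the right-to-left scan, not the left-to-right one; the two are inconsistent, and without sorting the monotonicity needed for either version does not hold. Your second alternative --- an $O(n)$-space table keyed on $z$ returning ``the leftmost breakpoint whose $q$-value exceeds a threshold'' --- cannot exist, because the answer depends on the real-valued contents of $q$ and on the threshold $c(j)$, not on the $O(\log n)$ possible bit-patterns of $z$ alone, so it is not precomputable. Adding the counting-sort preprocessing and committing to the right-to-left scan with the monotonicity-based halting rule would repair the argument and brings it in line with the paper's Algorithm~1.
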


\begin{proof} 
Let $z$ be a binary string.  For each index $i\geq 1$, let $z(i)$
denote the $i$-th bit of $z$.  Let $\textit{pred}(z,i_2)$ be the
largest index $i_1$ with $i_1\leq i_2$ and $z(i_1)=1$.  Let $Z$
consist of all $h$-bit binary strings. By $|Z|=2^h=O(\log n)$, it
takes $o(n)$ time to construct an $o(n)$-space data structure capable
of supporting each update to $z(i)$ and each query
$\textit{pred}(z,i)$ in $O(1)$ time.  

Let $M$ be the input $h\times m$ matrix.  Subscripts $M$ of $a_M$,
$b_M$, and $c_M$ are omitted in the proof.  To avoid
{\colora boundary conditions}, \reviewa{5}\label{a5}
let there be two additional dummy rows $0$ and $h+1$ in
$M$.  We first apply Lemma~\ref{lemma:lemma5} to have $M$ sorted in
$O(h+m)$ time. The proof needs only Property~S\ref{property:sorted1}
of $M$, though.  The algorithm proceeds iteratively, one iteration per
column of $M$, obtaining $\mu(i)=\min\{M(i,1),M(i,2),\ldots,M(i,j)\}$
for all row indices $i=1,2,\ldots,h$ at the end of the $j$-th
iteration.  As a result, at the end of the algorithm, we have the
minimum of each row of $M$ computed in the {\em minima array}
$\mu$. To support efficient dynamic updates and queries, we cannot
afford to explicitly store each element of $\mu$.  Instead, we use an
$h$-element {\em query array}~$q$ together with an {\em auxiliary
  binary string} $z$ for $q$ to represent $\mu$ such that
$\mu(i)=q(\textit{pred}(z,i))$ holds for each row index
$i=1,2,\ldots,h$.  Observe that if $z(i)=0$, then the value of $q(i)$
{\colora does not matter}.\reviewa{6}\label{a6}  See Figures~\ref{figure:figure3}(a)
and~\ref{figure:figure8}(a) for examples of $\mu$, $q$, and $z$.  
\begin{algorithm}[t]
\begin{center}
\medskip
\smallskip
\begin{tabbing}
\quad\ \ \={\em Initialization}:\qquad\=Let $q(0)=\infty$, $z(0)=1$, and $z(1)=z(2)=\cdots=z(h+1)=0$.\\
\>{\em For-loop}:      \>For each $j=1,2,\ldots,m$, execute the following steps.\\
\>{\em Step~1}:\>\qquad\=Let $i_0=a(j)$, $i_2=b(j)+1$, and $i_1=\textit{pred}(z,i_2-1)$.\\
\>{\em Step~2}:\>\>If $c(j)\geq q(i_1)$, then proceed to the next iteration of the
for-loop.\\
\>{\em Step~3}:\>\>If $z(i_2)=0$, then let $z(i_2)=1$ and $q(i_2)=q(i_1)$.\\
\>{\em Step~4}:\>\>While $i_0\leq i_1$ and $c(j)<q(i_1)$, execute the following substep.\\
\>{\em Substep~4a}:\>\>\qquad Let $z(i_1) = 0$, $i_2 = i_1$, and
  $i_1=\textit{pred}(z,i_2-1)$.\\
\>{\em Step~5}:\>\>If $c(j)<q(i_1)$, then let $z(i_0)=1$ and $q(i_0)=c(j)$.\\
\>{\em Step~6}:\>\>If $c(j)>q(i_1)$, then let $z(i_2)=1$ and $q(i_2)=c(j)$.
\end{tabbing}
\end{center}
\caption{Computing the row minima for an $h\times m$ sorted concise
  matrix concisely represented by arrays $a$, $b$, and $c$.}
\label{algorithm:algorithm1}
\end{algorithm}
The algorithm is as shown in Algorithm~\ref{algorithm:algorithm1}.
The initial binary string $z$ has exactly one $1$-bit.  Each iteration
of the for-loop increases the number of $1$-bits in $z$ by at most
three via Steps~3,~5, and~6. Each iteration of the while-loop of
Step~4 decreases the number of $1$-bits in $z$ by exactly
one. Therefore, the overall number of times executing Substep~4a
throughout all $m$ iterations of the for-loop is $O(m)$. Since the
initialization takes $O(h)$ time, Algorithm~\ref{algorithm:algorithm1}
runs in $O(m+h)$ time. The rest of the proof ensures the correctness
of Algorithm~\ref{algorithm:algorithm1}.\reviewa{7}

For each $j = 0,1,\ldots, m$, let $\mu_j$, $z_j$, and $q_j$ be the
$\mu$, $z$, and $q$ at the end of the $j$-th iteration, respectively.
See Figure~\ref{figure:figure8}(b) for the query array~$q_j$ at the
end of the $j$-th iteration for each $j=0,1,\ldots,7$ on the matrix
$M$ in Figure~\ref{figure:figure8}(a).
\begin{figure}[t]
\begin{center}
\scalebox{0.83}{
\begin{minipage}[b]{8.5cm}
\begin{center}
\begin{tabular}{|c||c|c|c|c|c|c|c||c|c|c|}
\hline
$M$&1&2&3&4&5&6&7&$\mu$&$q$&$z$\\
\hline
\hline
1 & 8&  &  &  &  &  & & 8&8&1\\
\hline
2 &  & 3& 7&  &  &  & & 3&3&1\\
\hline
3 &  & 3& 7& 6& 6&  & & 3& &0\\
\hline
4 &  & 3& 7& 6& 6& 3& & 3& &0\\
\hline
5 &  &  & 7& 6& 6& 3&7& 3& &0\\
\hline
6 &  &  & 7& 6& 6& 3&7& 3& &0\\
\hline
7 &  &  &  &  & 6&  &7& 6&6&1\\
\hline
8 &  &  &  &  & 6&  &7& 6& &0\\
\hline
\end{tabular}\\[24pt]
(a)
\end{center}
\end{minipage}
\begin{minipage}[b]{8.5cm}
\newcommand{\ity}{\infty}
\newcommand{\fb}{\it\color{blue}}
\newcommand{\cg}{\cellcolor[gray]{0.9}}
\begin{center}
\begin{tabular}{|c||c|c|c|c|c|c|c|c|c|c|c|c|}
\hline
 &    $q_0$&    $q_1$&    $q_2$&    $q_3$&    $q_4$&    $q_5$&    $q_6$& $q_7$\\
\hline
\hline
0&   $\ity$&\cg$\ity$&   $\ity$&   $\ity$&   $\ity$&   $\ity$&   $\ity$&$\ity$\\
\hline\hline
1&         &   {\fb8}&\cg     8&        8&        8&        8&        8&     8\\
\hline
2&         &\cg$\ity$&\cg{\fb3}&\cg     3&\cg     3&        3&\cg{\fb3}&     3\\
\hline
3&         &         &         &         &         &         &         &      \\
\hline
4&         &         &         &         &         &         &         &      \\
\hline
5&         &         &   $\ity$&\cg{\fb7}&\cg{\fb6}&\cg{\fb6}&\cg      &      \\
\hline
6&         &         &         &         &         &         &         &      \\
\hline
7&         &         &         &   $\ity$&   $\ity$&\cg      &        6&     6\\
\hline
8&         &         &         &         &         &         &         &      \\
\hline\hline
9&         &         &         &         &         &   $\ity$&   $\ity$&$\ity$\\
\hline
\end{tabular}\\[8pt]
(b)
\end{center}
\end{minipage}
}
\label{a7-2}
\end{center}
\caption{(a) A sorted $8\times 7$ concise matrix $M$, the final minima array
  $\mu$ of $M$, the final query array $q$ of $\mu$, and the final auxiliary binary
  string $z$.  The $\infty$-entries of $M$ and the entries of $q$ that
  do not matter are left out.  (b) For each $j=0,1,\ldots,7$, the
  query array $q_j$ at the end of the $j$-th iteration of the
  for-loop. The entries that do not matter are left out.
  {\colora The shaded cells of the $j$-th column with $1\leq j\leq 6$ indicate the indices $i_1$ and $i_2$ in the $j$-th iteration. The italic cell of the $j$-th column indicates the index $i^*$ of the $j$-th column.
%
  For instance, we have $i_1=0$, $i^*=1$, and $i_2=2$ in the first iteration and $i_1=i^*=2$ and $i_2=5$ in the sixth iteration.}}
\label{figure:figure8}
\end{figure}
By induction on the column index $j$, we prove 
\begin{equation}
\mbox{$q_j(\textit{pred}(z_j,i)) = \mu_j(i)$ for all indices $i$ with $1\leq i\leq h$.}
\label{eq:eq4}
\end{equation}
Equation~(\ref{eq:eq4}) with $j=0$ for all indices $i$ with $1\leq
i\leq h$ follows immediately from the initialization of
Algorithm~\ref{algorithm:algorithm1}.  Assuming
\begin{equation}
\mbox{$q_{j-1}(\textit{pred}(z_{j-1},i)) = \mu_{j-1}(i)$ for all
  indices $i$ with $1\leq i\leq h$}
\label{eq:eq5}
\end{equation}
holds with $j\geq 1$, we show Equation~(\ref{eq:eq4}) by the following
analysis on the $j$-th iteration of the for-loop.
By Property~S\ref{property:sorted1}
of $M$, we have $a(j)\geq \max\{a(1),a(2),\ldots,a(j-1)\}$, implying 
\begin{equation}
\mu_{j-1}(a(j))\leq 
\mu_{j-1}(a(j)+1)\leq 
\mu_{j-1}(a(j)+2)\leq \cdots \leq
\mu_{j-1}(b(j)).
\label{eq:eq6}
\end{equation}
We first consider the case with $\mu_{j-1}(b(j))\leq c(j)$.
  See
iteration~$7$ of the example in Figure~\ref{figure:figure8} for an
instance of this situation.  By Equation~(\ref{eq:eq6}), the $j$-th
column of $M$ does not affect the content of the minima array, i.e.,
$\mu_{j}=\mu_{j-1}$.  By Equation~(\ref{eq:eq5}), at the end of
Step~1, we have
$q(i_1)=q_{j-1}(\textit{pred}(z_{j-1},b(j)))=\mu_{j-1}(b(j))\leq
c(j)$.  Therefore, Step~2 proceeds to the next iteration without
altering the content of $q$ and $z$.  By $\mu_j=\mu_{j-1}$,
$z_j=z_{j-1}$, and $q_j=q_{j-1}$, Equation~(\ref{eq:eq4}) follows from
Equation~(\ref{eq:eq5}). The rest of the proof assumes
$c(j)<\mu_{j-1}(b(j))$, implying that Steps~4,~5, and~6 are executed
in the $j$-th iteration.

To prove Equation~(\ref{eq:eq4}) for indices $i$ with $b(j)<i\leq h$,
we first show that Steps~4,~5, and~6 do not alter the values of $z(i)$
and $q(i)$ for indices $i$ with $b(j)<i\leq h$.  At the end of Step~3,
condition $c(j)<q(i_1)$ holds.  Step~6 sets $z(i_2)=1$ and
$q(i_2)=c(j)$ only if $c(j)>q(i_1)$, implying that Substep~4a executes
at least once. We have $i_2\leq b(j)$ when Step~6 alters the values of
$z(i_2)$ and $q(i_2)$.  Observe that $\max(i_0,i_1)\leq b(j)$ holds
throughout the $j$-th iteration.  Therefore, Steps~4,~5, and~6 do not
alter the values of $q(i)$ and $z(i)$ for indices $i$ with $b(j)<i\leq
h$. By Equation~(\ref{eq:eq5}) and Step~3, we have $z_j(b(j)+1)=1$ and
$q_j(b(j)+1)=\mu_{j-1}(b(j)+1)=\mu_j(b(j)+1)$.  Since
$\mu_j(i)=\mu_{j-1}(i)$ holds for indices $i$ with $b(j)<i\leq h$,
Equation~(\ref{eq:eq4}) for indices $i$ with $b(j)<i\leq h$ follows
from Equation~(\ref{eq:eq5}) for indices $i$ with $b(j)<i\leq h$.
See iterations~$1$--$6$ of the example in Figure~\ref{figure:figure8}
for instances of this situation: Step~3 alters the content of $q$ and
$z$ in iterations~$1$--$3$ and $5$--$6$; Step~3 does not alter the
content of $q$ and $z$ in iteration~$4$.

It remains to prove Equation~(\ref{eq:eq4}) for indices $i$ with
$1\leq i\leq b(j)$.  After Step~1, we have $i_0=a(j)$ for the rest of
the $j$-th iteration.  Step~4 sets $z(i)=0$ for each index $i$ with
$i_0\leq i\leq b(j)$, $z_{j-1}(i)=1$, and $c(j)<q_{j-1}(i)$.  The
following equations hold for the fixed values of indices $i_1$ and
$i_2$ after Step~4 (i.e., during the execution of Steps~5 and~6):
\begin{eqnarray}
i_0>i_1&\mbox{or}&c(j)\geq\mu_{j-1}(i_1)\label{eq:eq7}\\
\mu_{j-1}(i)&=&q_{j-1}(i_1) \mbox{\qquad\qquad\qquad\qquad for all indices~$i$ with $i_1\leq
  i\leq i_2-1$.}
\label{eq:eq8}
\end{eqnarray}
Equation~(\ref{eq:eq7}) is by the fact that the condition of
while-loop of Step~4 does not hold.  Equation~(\ref{eq:eq8}) follows
from Equation~(\ref{eq:eq5}) and $i_1=\textit{pred}(z_{j-1},i_2-1)$,
as ensured by Step~1 and Substep~4a.
By $c(j)<\mu_{j-1}(b(j))$, we have $\mu_j(b(j))=c(j)$.  Moreover, if
$i_2\leq b(j)$ (i.e., Substep~4a being executed at least once in the
$j$-th iteration), then Equation~(\ref{eq:eq6}) implies
\begin{equation}
\mbox{$\mu_j(i)=c(j)$ for all indices $i$ with $i_2\leq i\leq b(j)$.}
\label{eq:eq9}
\end{equation}
Let $i^*$ be the smallest index with $i_1\leq i^*$ and
$\mu_j(i^*)=\mu_j(i^*+1)=\cdots=\mu_j(b(j))=c(j)$.  
In iterations~$1$--$6$ of the example in Figure~\ref{figure:figure8},\reviewa{7}\label{a7-1}
for each $j=1,2,\ldots,6$, the $i^*$-th entry of $q_j$ is italic
and the $i_1$-th and $i_2$-th entries of $q_j$ with $i_1<i_2$ are
shaded in Figure~\ref{figure:figure8}(b).  For instance, we have
$(i_1,i_2,i^*)=(0,2,1)$ 
in iteration~$1$ and $(i_1,i_2,i^*)=(2,5,2)$
in iteration~6.
One can verify 
\begin{equation}
\mbox{$\mu_j(i)=\mu_{j-1}(i)$ for all indices $i$ with $1\leq
i<i^*$}
\label{eq:eq10}
\end{equation}
as follows.  For each index $i$ with $1\leq i<i_0$, we already have
$\mu_j(i)=\mu_{j-1}(i)$, since the $(i,j)$-entry of $M$ is $\infty$.
Therefore, it remains to consider the case with $i_0\leq i^*-1$ and
verify Equation~(\ref{eq:eq10}) for indices $i$ with $i_0\leq i\leq
i^*-1$.  By Equation~(\ref{eq:eq6}), it suffices to ensure
$\mu_j(i^*-1)=\mu_{j-1}(i^*-1)$.  Assume
$\mu_j(i^*-1)\ne\mu_{j-1}(i^*-1)$ for a contradiction.  We have
$\mu_{j-1}(i^*-1)>\mu_j(i^*-1)=c(j)$.  By $\mu_j(i^*-1)=c(j)$ and the
definition of $i^*$, we have $i^*=i_1$, which implies $i_0<i_1$.  By
$i_0<i_1=i^*$ and Equation~(\ref{eq:eq7}), we have $c(j)\geq
\mu_{j-1}(i_1)=\mu_{j-1}(i^*)$, implying $\mu_{j-1}(i^*-1)>c(j)\geq
\mu_{j-1}(i^*)$. By definition of $i^*$, we have $i^*\leq b(j)$.
However, $\mu_{j-1}(i^*-1)>\mu_{j-1}(i^*)$ and $i_0\leq i^*-1<b(j)$
contradict with Equation~(\ref{eq:eq6}).

Assume $i_2<i^*$ for a contradiction. By definition of $i^*$, we have
$i_2\le b(j)$, implying that Step~4a is executed at least once. By
Equation~(\ref{eq:eq9}), $\mu_j(i)=c(j)$ holds for all indices~$i$
with $i_2\leq i\leq b(j)$, which contradicts with the definition of
$i^*$.  By $i^*\leq i_2$, we have $q(i)=q_{j-1}(i)$ and
$z(i)=z_{j-1}(i)$ for all indices~$i$ with $1\leq i<i^*$ at the end of
Step~4.  By $i_1\leq i^*$, we have $z(i)=0$ for all indices~$i$ with
$i^*<i\leq b(j)$ at the end of Step~4. Combining with
Equation~(\ref{eq:eq10}), in order to satisfy Equation~(\ref{eq:eq4})
for all indices~$i$ with $1\leq i\leq b(j)$, it suffices for Steps~5
and~6 to additionally ensure $z(i^*)=1$ and $q(i^*)=c(j)$. By the
following case analysis, ensuring $z_j(i^*)=1$ and $q_j(i^*)=c(j)$ is
exactly what Steps~5 and~6 do.
\begin{itemize}
\item {\em Case~0:} $c(j)<q_{j-1}(i_1)$.  We show $i^*=i_0$. By
  $c(j)<q_{j-1}(i_1)=\mu_{j-1}(i_1)$ and Equation~(\ref{eq:eq7}), we
  have $i_1<i_0$.  Before executing Step~4, we have $i_0<i_2$.  Each
  time when Substep~4a is executed, the current value of $i_2$ equals
  the value of $i_1$ in the previous iteration of the while-loop, when
  condition $i_0\leq i_1$ of the while-loop must hold. No matter
  whether Step~4a is executed or not, we have $i_0\leq i_2$ at the end
  of Step~4.  If $i_0<i_2$, then $i_1<i_0<i_2$ and
  Equation~(\ref{eq:eq8}) imply $\mu_{j-1}(i_0)=q_{j-1}(i_1)>c(j)$.
  If $i_0=i_2$, then~$i_0$ equals the value of $i_1$ at the execution
  of Substep~4a for the last time, when condition $c(j)<q(i_1)$ of the
  while-loop must hold. Thus, we have
  $\mu_{j-1}(i_0)=q_{j-1}(i_0)>c(j)$.  Either way, we have
  $\mu_{j-1}(i_0)>c(j)$. By $\mu_{j-1}(i_0)>c(j)$, and
  Equation~(\ref{eq:eq6}), we have $\mu_{j}(i)=c(j)$ 
  for all indices~$i$ with $i_0\leq i\leq b(j)$. By $i_1<i_0$, we have
  $i^*\leq i_0$.  By $i_1\leq i_0-1<i_2$ and Equation~(\ref{eq:eq8}),
  we have $\mu_j(i_0-1)=\mu_{j-1}(i_0-1)=q_{j-1}(i_1)>c(j)$, implying
  $i^*=i_0$.

\item {\em Case~1:} $c(j)=q_{j-1}(i_1)$.  We show $i^*=i_1$.  By
  $c(j)=q_{j-1}(i_1)$ and the fact that condition $c(j)<q(i_1)$ holds
  at the end of Step~3, we know that Step~4a is executed at least
  once, implying $i_2\leq b(j)$ and Equation~(\ref{eq:eq9}).  By
  $c(j)=q_{j-1}(i_1)$ and Equation~(\ref{eq:eq8}), we have
  $\mu_{j-1}(i)=c(j)$ and thus $\mu_j(i)=c(j)$ for all indices $i$
  with $i_1\leq i<i_2$.  Therefore, $i^*=i_1$.  

\item {\em Case~2:} $c(j)>q_{j-1}(i_1)$.  We show $i^*=i_2$.  By
  $c(j)>q_{j-1}(i_1)$ and the fact that condition $c(j)<q(i_1)$ holds
  at the end of Step~3, we know that Step~4a is executed at least
  once. By Equation~(\ref{eq:eq9}), we have $i^*\leq i_2$.  By
  Equation~(\ref{eq:eq8}) and $c(j)>q_{j-1}(i_1)$, we have
  $c(j)>\mu_{j-1}(i_2-1)$, implying $\mu_j(i_2-1)<c(j)$.  Therefore,
  $i^*=i_2$.  
\end{itemize}
For Case~0, i.e., $i^*=i_0$, as illustrated by iterations~$1$ and~$2$
of the example in Figure~\ref{figure:figure8}, Step~5 correctly sets
$z_j(i^*)=1$ and $q_j(i^*)=c(j)$.  For Case~2, i.e., $i^*=i_2$, as
illustrated by iterations~$3$ and~$4$ of the example in
Figure~\ref{figure:figure8}, Step~6 correctly sets $z_j(i^*)=1$ and
$q_j(i^*)=c(j)$.  For Case~1, we have $i^*=i_1$, as illustrated by
iterations~$5$ and~$6$ of the example in Figure~\ref{figure:figure8}.
At the end of Step~4, we already have $z(i^*)=1$ and $q(i^*)=c(j)$.
Since Steps~5 and~6 do not alter the content of $q$ and $z$, we also
have $z_j(i^*)=1$ and $q_j(i^*)=c(j)$.  The lemma is proved.
\end{proof}

\subsection{Proving Lemma~\ref{lemma:lemma3}}
\label{subsection:step3}
We are ready to prove the lemma of the section.

\begin{proof}[Proof of Lemma~\ref{lemma:lemma3}]
It suffices to prove the lemma for the case that the input $n\times m$
matrix is concise.  Let $h=\max(1,\lceil\log_2\log_2 n\rceil)$.  
Let $M$ be the submatrix of the
input matrix induced by the $h$-brushed columns.  By
Lemma~\ref{lemma:lemma7}, the row-minima problem on $M$ can be reduced
in $O(n+m)$ time to the row-minima problem on an $O(\frac{n}{h})\times
O(m)$ matrix $M^*$.  By Lemma~\ref{lemma:lemma4}, the row minima of
$M^*$ can be computed in time $O(\frac{n}{h}\log\log n+m)=O(n+m)$,
which yield the row minima of $M$ in $O(n+m)$ time.

Let $M_0$ be the submatrix of the input matrix induced by the columns
that are not $h$-brushed.  Let $\ell=\lceil \frac{n}{h}\rceil$.  For
each $k=1,2,\ldots,\ell$, let $M_k$ be the submatrix of $M_0$ induced
by the columns whose indices $j$ satisfy $(k-1) \cdot h <
a_{M_0}(j)\leq b_{M_0}(j) < k\cdot h$ and the rows with indices
$(k-1)\cdot h+1, (k-1)\cdot h+2, \ldots, k\cdot h-1$.  See
Figure~\ref{figure:figure4} for an illustration.  Let $m_k$ be the
number of columns in $M_k$.  By Lemma~\ref{lemma:lemma8}, the row
minima of $M_k$ can be computed in $O(h+m_k)$ time, with the help of
an $O(n)$-time pre-computable data structure.  As a result, the
row-minima problems on all matrices $M_k$ with $1\leq k\leq \ell$ can
be solved in overall time $O(n)+\sum_{1\leq
  k\leq\ell}O(h+m_k)=O(n+m)$.  The row minima of $M_0$ can be obtained
from combining the row minima of $M_1,M_2,\ldots,M_\ell$ in $O(n+m)$
time. The lemma is proved.
\end{proof}

\section{Concluding remarks}
\label{section:conclude}

For directed acyclic graphs and undirected graphs, we give linear-time
reductions for the replacement-paths problem to the single-source
shortest-paths problem.  The reductions are based upon our
$O(n+m)$-time algorithm for the row-minima problem on an
$O(1)$-concise $n\times m$ matrix, which is allowed to have negative
entries.  On the one hand, our reductions for directed acyclic graphs
in~\S\ref{subsection:reduction-edge}
and~\S\ref{subsection:reduction-node} work even if there are
negative-weighted edges.  Therefore, we have shown that the
replacement-paths problem on directed acyclic graphs with general
weights is no harder than the single-source shortest-paths problem on
directed acyclic graphs with general weights.  On the other hand, our
reductions for undirected graphs in~\S\ref{subsection:reduction-edge}
and~\S\ref{subsection:reduction-node} do assume nonnegativity of edge
weights.  However, it is not difficult to accommodate
negative-weighted edges in undirected graphs for the replacement-paths
problem as to be briefly explained in the next two paragraphs.

\begin{figure}[t]
\centerline{\scalebox{0.83}{\small\input{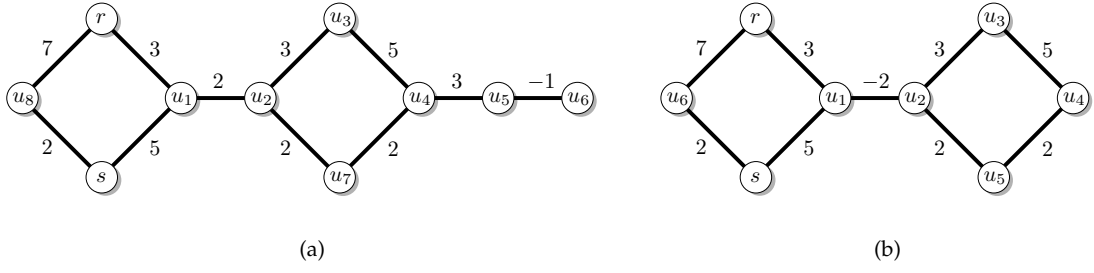}}}
\caption{Two undirected connected graphs $G$ with $d_G(r,s)=-\infty$.}
\label{figure:figure9}
\end{figure}

Let $r$ and $s$ be two nodes of the input connected undirected
$n$-node $m$-edge graph $G$ with negative-weighted edges. 
See
Figure~\ref{figure:figure9} for examples.  
We have
$d_G(r,s)=-\infty$. 
{\colora $G$ has no shortest $rs$-path.}\reviewa{8}\label{a8}
The input $rs$-path $P$ must
pass some negative-weighted edge an infinite number of times.  
For each edge $e\in P$, let
$G_e$ denote the connected component of $G-e$ that contains~$r$.  It
takes overall $O(n+m)$ time to classify all edges $e$ of $P$ into the
following three sets.
\begin{itemize}
\item {\em Set~1:}
$s\notin G_e$. We have $d_{G-e}(r,s)=\infty$.  
\item {\em Set~2:}
$s\in G_e$ and $G_e$ has negative-weighted edges. We have
$d_{G-e}(r,s)=-\infty$.
\item {\em Set~3:}
$s\in G_e$ and $G_e$ has no negative-weighted edges. We have
$d_{G-e}(r,s)=d_{G_e}(r,s)$.
\end{itemize}
It can be verified that if Set~3 is non-empty, then distances
$d_{G_e}(r,s)$ are identical for all edges $e$ of Set~3.
See~Figure~\ref{figure:figure9}(a) for an example.  The edges in
Set~3 are $u_1u_2$, $u_4u_5$, and $u_5u_6$.  We have
$d_{G-u_1u_2}(r,s)= d_{G-u_4u_5}(r,s)= d_{G-u_5u_6}(r,s)=8$.
Therefore, the replacement-paths problem on $G$ with respect to $P$
can be reduced in $O(n+m)$ time to the single-source shortest-paths
problem on $G_e$ for an arbitrary edge $e$ in Set~3.
As a result, the edge-avoiding version of the replacement-paths
problem on undirected graphs with general weights is no harder than
the single-source shortest-paths problem on undirected graphs with
{\em nonnegative} weights.  

The node-avoiding version of the replacement-paths problem is slightly
more complicated.  For each node $v\in P$ other than $r$ and $s$, let
$G_v$ denote the connected component of $G-v$ that contains $r$.  It
takes overall $O(n+m)$ time to classify all nodes $v$ of $P$ other
than $r$ and $s$ into the following three sets.
\begin{itemize}
\item {\em Set~1':}
$s\notin G_v$. We have $d_{G-v}(r,s)=\infty$.  
\item {\em Set~2':}
$s\in G_v$ and $G_v$ has negative-weighted edges. We have
$d_{G-v}(r,s)=-\infty$.
\item {\em Set~3':}
$s\in G_v$ and $G_v$ has no negative-weighted edges. We have
$d_{G-v}(r,s)=d_{G_v}(r,s)$.
\end{itemize}
If Set~3' is non-empty, then $d_{G_v}(r,s)$ are not necessarily
identical for all nodes $v$ of Set~3'.
See~Figure~\ref{figure:figure9}(b) for an example.  The nodes in
Set~3' are $u_1$ and $u_2$.  We have $d_{G-u_1}(r,s)=9$ and
$d_{G-u_2}(r,s)=8$.  However, one can show that there are at most two
distinct values of $d_{G_v}(r,s)$ for all nodes $v$ of Set~3'.
Therefore, the node-avoiding version of the replacement-paths
problem on undirected graphs with general weights is also no harder than
the single-source shortest-paths problem on undirected graphs with
{\em nonnegative} weights.  

Our presentation focuses on computing the edge-avoiding and
node-avoiding distances. It is not difficult to additionally report
their corresponding edge-avoiding and node-avoiding shortest paths in
$O(1)$ time per edge.  For instance, given a shortest-paths tree $T$
of $G$ rooted at $r$ and a shortest-paths tree $T'$ of $G'$ rooted at
$s$ as defined in~\S\ref{subsection:reduction-edge}, if the $xy$-th
column of the edge-replacement matrix $M$ contains the minimum of the
$i$-th row, then the union of (a) the $rx$-path in $T$, (b) the edge
$xy$, and (c) the $ys$-path in $T'$ is a shortest $rs$-path in
$G-e_i$.  The node-avoiding shortest $rs$-path can be similarly
obtained from $T$, $T'$, and a shortest-paths tree $T_0$ of $G_0$
rooted at $r_0$ as defined in~\S\ref{subsection:reduction-node}.

It would be of interest to see results for the single-source,
all-pairs, or near-optimal version of the problem of finding
replacement paths in undirected graphs or directed acyclic graphs that
avoid multiple failed nodes or edges.

\bibliographystyle{abbrv}
\bibliography{replace}
\end{document}